\documentclass[11pt]{article}
\usepackage[margin=1in]{geometry}
\usepackage{microtype}
\usepackage{amsmath,amssymb,amsthm}
\usepackage{algorithm}
\usepackage{algpseudocode}
\usepackage{tikz}
\usepackage{pgfplots}
\pgfplotsset{compat=1.16}
\usepackage{booktabs}
\usepackage[numbers,sort&compress]{natbib}
\definecolor{navy}{RGB}{0,40,120}
\definecolor{citegreen}{RGB}{0,110,60}
\usepackage[colorlinks=true,linkcolor=navy,citecolor=citegreen,urlcolor=navy]{hyperref}
\newcommand{\doi}[1]{\href{https://doi.org/#1}{doi:\nolinkurl{#1}}}

\newtheorem{theorem}{Theorem}
\newtheorem{lemma}{Lemma}
\newtheorem{proposition}{Proposition}
\newtheorem{fact}{Fact}
\theoremstyle{definition}
\newtheorem{definition}{Definition}
\theoremstyle{plain}

\newcommand{\defeq}{\stackrel{\text{def}}{=}}
\newcommand{\E}{\mathbb{E}}

\newcommand{\cP}{\mathcal{P}}
\newcommand{\cX}{\mathcal{X}}
\newcommand{\ind}{\mathbf{1}}
\newcommand{\ALG}{\mathrm{ALG}}
\newcommand{\OPT}{\mathrm{OPT}}
\newcommand{\LB}{\mathrm{LB}}
\newcommand{\VN}{\mathrm{VN}}

\newcommand{\thnet}{\theta_{\mathrm{net}}}

\title{Stochastic Network Speed Scaling with Competitive Ratios\\ Independent of the Network}
\author{Yash Khanna\\[2pt]
  {\normalsize Independent Researcher, India}\\[1pt]
  {\normalsize\href{mailto:yashkhanna@alum.iisc.ac.in}{\texttt{yashkhanna@alum.iisc.ac.in}}}}
\date{}

\begin{document}

\maketitle

\begin{abstract}
In network speed scaling, jobs arrive over time at a network of servers whose speeds can be tuned, every job must be processed by the servers along one of its allowed routes, and the goal is to minimize the total flow time plus the total energy. For stochastic arrivals, the competitive ratio of Vaze and Nair depends on the network, through the lengths of the routes it uses. We show that this dependence can be removed: we give an algorithm, which routes the jobs by solving a convex program and runs every server at a fixed speed, whose competitive ratio depends only on the power functions; for $P(s)=s^2$, it is at most the golden ratio $\varphi\approx1.618$. The key idea is a lower bound on the optimal cost which, like the cost of our algorithm, is a sum over the servers of a function of each server's load, so the analysis reduces to a single server. We also show that the routing rule of Vaze and Nair can be a factor $\Omega(L)$ away from optimal, where $L$ is the length of the longest route.

\medskip
\noindent\textbf{Keywords:} Speed scaling $\cdot$ Online algorithms $\cdot$ Queueing networks $\cdot$ Routing $\cdot$ Competitive ratio $\cdot$ Energy-efficient scheduling
\end{abstract}

\section{Introduction}

Energy consumption is a major concern for computing systems, from large data centers to battery-powered devices. A common technique to save energy is \emph{speed scaling}, also known as dynamic voltage and frequency scaling: a server, such as a processor, can run at different speeds, and running more slowly consumes much less power~\cite{Alb10}. In the standard model, running at speed $s$ consumes power $P(s)$, where $P$ is convex and increasing; a typical choice is $P(s)=s^\alpha$ for a constant $\alpha>1$, e.g., $\alpha=3$ by the cube-root rule for CMOS devices~\cite{Alb10}. Convexity means that doubling the speed more than doubles the power, so processing a given amount of work slowly uses less energy than processing it fast.

Speed scaling therefore creates a trade-off between the quality of service and the energy: running fast completes the jobs early but consumes a lot of energy, and running slowly saves energy but makes the jobs wait. A standard way to capture this trade-off, introduced by Albers and Fujiwara~\cite{AF07}, is to minimize the total \emph{flow time} of the jobs, i.e., the sum over all jobs of the time between their arrival and their completion, plus the total energy. For example, suppose that a single job with $w$ units of work is processed alone at a constant speed $s$, with $P(s)=s^2$. The job takes time $w/s$, so it contributes $w/s$ to the flow time, and the energy is $(w/s)\cdot P(s)=ws$. The total cost $w(1/s+s)$ is minimized at $s=1$, where it equals $2w$: a higher speed wastes energy, and a lower speed wastes time. When several jobs are present, the server should run faster, since every waiting job adds to the flow time. Starting with the work of Yao, Demers, and Shenker~\cite{YDS95}, speed scaling has been studied extensively for a single server and for parallel servers~\cite{AF07,BKP07,ALW10,VN20b}.

In practice, jobs arrive over time, and the speeds have to be chosen without knowing the future. Such algorithms are called \emph{online}, and they are evaluated by their \emph{competitive ratio}: an algorithm is $c$-competitive if, on every input, its cost is at most $c$ times the cost of an optimal solution, denoted by $\OPT$; the closer $c$ is to $1$, the better. There are two common ways to model the input. In the \emph{worst case} setting, the jobs arrive at arbitrary times, and $\OPT$ knows all the arrivals in advance. In the \emph{stochastic} setting, the jobs arrive at random times according to a known probabilistic model, the costs are long-run averages, and $\OPT$ is the best policy which, like the algorithm, does not know the future; an algorithm may thus use the arrival rates, but not the actual arrival times.

Many systems, such as flow shops and multi-hop communication networks, are better modelled as a \emph{network} of servers~\cite{VN20a}, where a job is complete only after it has been processed by all the servers on some route. Vaze and Nair~\cite{VN19,VN20a} initiated the study of this \emph{network speed scaling} problem, where an algorithm decides the route of every job, the scheduling at every server, and the speeds. In the \emph{stochastic} setting, where jobs arrive according to Poisson processes and bring exponentially distributed work to every server they visit, their competitive ratio for $P(s)=s^\alpha$ is $\theta_\alpha\cdot\thnet$, where $\theta_\alpha=3+\max\{1,2\alpha-2\}$, and $\thnet\geqslant1$ is a traffic-weighted ratio between the weights of the routes used by a flow and the smallest weight of its routes, where the weight of a route is a sum of per-server constants (its length, when all servers are identical); their routing is chosen by a convex program which only accounts for the energy. In the \emph{worst case} setting, where unit-size jobs arrive at arbitrary times, they gave constant competitive algorithms for a series of servers (a \emph{tandem}), with a ratio of $18$ for $P(s)=s^2$~\cite{VN19,VN20a}, and for parallel paths of equal length; for paths of different lengths, their competitive ratio depends on the path diversity, and for networks shared by several flows, on the maximum number of flows sharing a server~\cite{VN20a}.
This naturally raises the following question.
\begin{center}
\emph{Does the competitive ratio for stochastic network speed scaling have to depend on the network?}
\end{center}
In this work, we answer this question in the negative.

\subsection{Our Results}

Every job belongs to a \emph{flow}, and all the jobs of a flow have the same set of allowed \emph{routes}; e.g., in a communication network, a flow could consist of the packets from a given source to a given destination. A route is a sequence of servers (possibly with repetitions), and every job has to be processed along one of its allowed routes, one server after the other. In the stochastic setting, the jobs of every flow arrive according to a Poisson process, the standard model of memoryless arrivals at a given rate, and every visit of a job to a server brings an amount of work which is exponentially distributed; the formal model is in Section~\ref{sec:stoch}. Our main result is the following.

\begin{theorem}[Informal version of Theorems~\ref{thm:stoch} and~\ref{thm:phi}]\label{thm:informal-stoch}
In the stochastic setting, when every server has a power function $P_v(s)=\gamma_vs^\alpha$, there is an algorithm whose competitive ratio is at most $\kappa(\alpha)\leqslant\max\{e,\alpha-1\}$, where $\kappa(\alpha)$ depends only on $\alpha$, and $\kappa(2)=\varphi=(1+\sqrt5)/2\approx1.618$. For every $\varepsilon>0$, a version whose ratio is at most $(1+\varepsilon)\kappa(\alpha)$ runs in polynomial time. The guarantee holds for every network and for arbitrary sets of allowed routes.
\end{theorem}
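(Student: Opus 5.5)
The plan is to reduce everything to a comparison, server by server, of two functions of the load $\rho_v$. Here $\rho_v$ is the long-run rate at which work arrives at server $v$, i.e.\ the sum over flows and routes of the arrival rate on that route times the number of visits the route makes to $v$ (mean work per visit normalized to $1$).

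\textbf{Algorithm and its cost.} Split every flow's Poisson arrivals independently at random among its routes, with fixed probabilities. Run each server $v$ at a constant speed $s_v$ whenever it is busy, with FCFS or processor sharing. Poisson splitting keeps each route's arrival stream Poisson, and the work per visit is exponential. The system is therefore an open Jackson (Kelly-type) network, whose stationary distribution has product form. The mean number of jobs at $v$ is then $\rho_v/(s_v-\rho_v)$ and the energy rate is $(\rho_v/s_v)P_v(s_v)$. By Little's law the long-run cost rate (total flow time plus total energy, per unit time) is
\[
\textstyle\sum_v f_v(\rho_v),\qquad f_v(\rho)\defeq\min_{s>\rho}\Bigl(\frac{\rho}{s-\rho}+\rho\gamma_v s^{\alpha-1}\Bigr).
\]
I would check that $f_v$ is convex and increasing in $\rho$. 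The routing probabilities are then chosen by solving the convex program minimizing $\sum_v f_v(\rho_v)$ over fractional routings. This program is exactly solvable up to $(1+\varepsilon)$ in polynomial time, which gives the polynomial-time version.

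\textbf{Lower bound on $\OPT$.} For any (non-anticipating) policy, total flow time equals the sum over servers of the time jobs spend at that server, so its cost splits as a sum over $v$ of the time spent at $v$ plus the energy at $v$. Let $\rho_v^*$ be $\OPT$'s long-run work rate into $v$; these loads come from a feasible fractional routing, namely the long-run fractions of each flow sent on each route. I would then prove a single-server bound: any server that receives work at rate $\rho$, under \emph{any} arrival pattern, incurs cost rate at least $\LB_v(\rho)$. This bound must hold for arbitrary arrivals because in $\OPT$ the arrivals at $v$ need not be Poisson.

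The candidate is
\[
\LB_v(\rho)=\min\Bigl\{\textstyle\int \bigl(\tfrac{1}{s}+\gamma_v s^{\alpha-1}\bigr)s\,d\mu(s)\ :\ \int s\,d\mu(s)=\rho\Bigr\},
\]
where $\mu$ is the time-occupation measure of the speed of $v$. The justification runs as follows. Each unit of work processed at speed $s$ costs at least $1/s$ in flow time, since a job is present while its work is processed. It also costs $\gamma_v s^{\alpha-1}$ in energy. The throughput constraint forces the time-average speed to equal $\rho$. Taking the convex envelope gives a convex function of $\rho$. It is at least $\max\{P_v(\rho),\ c_\alpha\rho\}$, where $c_\alpha=\min_s(1/s+\gamma_v s^{\alpha-1})$.

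Then $\OPT\geqslant\sum_v\LB_v(\rho^*_v)\geqslant\min_{\rho\text{ feasible}}\sum_v\LB_v(\rho_v)$. Since $f_v$ and $\LB_v$ are both sums over servers of a function of the load, the chain of inequalities
\[
\ALG=\min\sum_v f_v\leqslant\sum_v f_v(\rho^*_v)\leqslant\kappa\sum_v\LB_v(\rho^*_v)\leqslant\kappa\,\OPT
\]
reduces the theorem to the pointwise inequality $f_v\leqslant\kappa(\alpha)\LB_v$. By scaling $\gamma_v$ away, this becomes a one-variable inequality depending only on $\alpha$. Consequently the ratio does not depend on the network or the route sets.

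\textbf{Main obstacle.} The delicate step is the single-server lower bound. It must be strong enough that $\sup_\rho f(\rho)/\LB(\rho)$ is small; the goal is $\varphi$ for $\alpha=2$ and $\max\{e,\alpha-1\}$ in general. At the same time it must stay valid for adversarial, non-Poisson arrivals and dynamic speeds.

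The bound above ignores queueing delay. $f$ pays the term $\rho/(s-\rho)$ for waiting, while $\LB$ only charges $1/s$ per unit of work. So I expect to have to strengthen $\LB$ by exploiting that, per server, $\OPT$ still faces exponential work per visit. One route is a fluid/MDP argument: the optimal single-server policy for Poisson input is a lower bound even with non-Poisson input. If that fails, I would argue via a memoryless-work busy-period bound. With $\LB$ fixed, computing $\kappa(\alpha)=\sup_\rho f/\LB$ is a calculus exercise. For $\alpha=2$ I would optimize $s$ explicitly, expecting $s=\rho+\sqrt{\rho}$-type expressions, and locate the maximizing $\rho$ to obtain the golden ratio.
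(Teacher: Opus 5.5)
There is a genuine gap at the step you flag as the main obstacle. You never commit to a lower bound that yields the stated constants, and you propose replacing the one you wrote by a stronger one. That replacement is unnecessary, and the route you suggest for it is invalid.

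Your candidate is already the right bound, once you impose a constraint you left implicit. The measure $\mu$ must be the occupation measure of the speed over the \emph{busy} times only, so its total mass is the busy fraction $B\leqslant1$. Without this, the minimum collapses to $c^*_v\rho$, which is not $\geqslant P_v(\rho)$. With it, Jensen gives $B+\int P_v\,d\mu\geqslant B+BP_v(\rho/B)=\rho\,(1+P_v(\sigma))/\sigma$ with $\sigma=\rho/B\geqslant\rho$. So your $\LB_v$ is exactly the paper's $F_v(\rho)=\rho\min_{\sigma\geqslant\rho}(1+P_v(\sigma))/\sigma$, i.e., $c^*_v\rho$ for $\rho\leqslant s^*_v$ and $1+P_v(\rho)$ for $\rho\geqslant s^*_v$. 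Above $s^*_v$ this is strictly larger than $\max\{P_v(\rho),c^*_v\rho\}$, and the difference matters: the plain maximum gives only $2$ for $\alpha=2$. The statement's $\kappa(\alpha)$ is precisely $\sup_\rho G/F$ for this $F$.

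Queueing delay is not a problem. The charge $N_v\geqslant\ind\{N_v\geqslant1\}$ of one job per unit of busy time is the $1$ in $F_v(\rho)=1+P_v(\rho)$, and it pays for the waiting term. With $\gamma=1$:
\begin{itemize}
\item For $\rho\geqslant s^*$, the speed $s=(1+\delta)\rho$ gives $G(\rho)\leqslant1/\delta+(1+\delta)^{\alpha-1}\rho^\alpha$.
\item For $\rho\leqslant s^*$, the speed $s=\rho+s^*/(\alpha-1)$ gives $G(\rho)\leqslant\frac{(\alpha-1)^2+e}{\alpha}F(\rho)$.
\item Taking $\delta=1/(\alpha-1)$ in the first case, the two cases together give $\max\{e,\alpha-1\}$.
\item For $\alpha=2$, the optimal speed is $s=\rho+1$ (not $\rho+\sqrt\rho$), so $G(\rho)=\rho^2+2\rho$. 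Then $G/F=(\rho+2)/2\leqslant3/2$ for $\rho\leqslant1$, and $\varphi(1+\rho^2)-G(\rho)=(\rho-\varphi)^2/\varphi\geqslant0$ for $\rho\geqslant1$, with equality at $\rho=\varphi$.
\end{itemize}

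The strengthening you suggest would not go through. The optimal single-server cost under Poisson input is not a lower bound for other inputs of the same rate, because smoother input makes a queue cheaper. A GI/M/1 queue with Erlang or deterministic interarrivals holds fewer jobs than an M/M/1 queue at the same load and speed. A policy can create such input: routing a Poisson flow round-robin over two routes feeds each route Erlang-$2$ interarrivals.

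Apart from this, your skeleton matches the paper: Poisson thinning, gated static speeds, Kelly's product form giving the per-server cost $G_v$, a per-server lower bound summed over servers, and scaling out $\gamma_v$. Your one real deviation is routing by $\min\sum_vG_v$ instead of $\min\sum_vF_v$, and it is valid. The chain $\sum_vG_v(\rho_v(x^G))\leqslant\sum_vG_v(\rho_v(x^\pi))\leqslant\kappa\sum_vF_v(\rho_v(x^\pi))\leqslant\kappa J(\pi)$ works for every admissible $\pi$. Moreover, $G_v$ is convex: with $b=\rho/s\in(0,1)$, one has $g_v=b/(1-b)+bP_v(\rho/b)$, which is jointly convex in $(\rho,b)$ as a perspective, and $G_v$ is its partial minimum. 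Your version gives an algorithm at least as good as the paper's, at the cost of an inner one-dimensional minimization in every oracle call. The paper's objective $F_v$ is explicit and coincides with the lower bound.
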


If the exponents differ across servers, i.e., $P_v(s)=\gamma_vs^{\alpha_v}$, the same algorithm is $\max_v\kappa(\alpha_v)$-competitive (Theorem~\ref{thm:stoch}). The values of $\kappa(\alpha)$ for several $\alpha$ are listed in Appendix~\ref{app:kappa}.

We also show that the dependence on the network in~\cite{VN20a} is inherent to their routing rule, and is not merely an artifact of the analysis.

\begin{proposition}[Informal version of Proposition~\ref{prop:separation}]\label{prop:informal}
For every $L\geqslant4$ and $\varepsilon>0$, there is an instance with a single flow, whose longest route has $L$ servers, such that every policy which routes the jobs according to the routing rule of~\cite{VN20a} has cost at least $(1-\varepsilon)L/3$ times the optimal cost, irrespective of its scheduling and its speeds.
\end{proposition}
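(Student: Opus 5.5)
The plan is to build a single-flow instance in which the routing rule of~\cite{VN20a} spreads the traffic over many long routes. For $P(s)=s^2$ and identical servers, that rule minimizes the total energy $\sum_v P(\lambda_v)$ as a function of the per-server loads $\lambda_v$. Its weakness is that it ignores the flow time, so it rewards spreading load over many servers even when every extra hop costs flow time. I would take one \emph{short} route consisting of a single server $u$, and $m$ vertex-disjoint \emph{long} routes, each consisting of $L$ distinct servers. All servers have $P(s)=s^2$, jobs arrive at rate $\lambda$, and every visit brings exponential work of mean $1$. The parameters $m$ (large, e.g.\ $m\geqslant L^2/\varepsilon$) and $\lambda$ (small, $\lambda\to0$) are chosen at the end as functions of $L$ and $\varepsilon$.

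\emph{Step 1: the rule's routing.} I would solve the convex program of~\cite{VN20a} on this instance explicitly. By symmetry, the long routes share the rate $y$ equally and the short route gets $x=\lambda-y$. The objective is then proportional to $x^2+mL(y/m)^2=x^2+Ly^2/m$. Minimizing gives $y=\lambda\cdot\frac{m}{m+L}\geqslant(1-\varepsilon/3)\lambda$ once $m$ is large. So essentially all jobs are sent on routes with $L$ hops, and each long-route server receives load only $y/m$.

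\emph{Step 2: lower bound for any policy using this routing.} Whatever scheduling and speeds are used, each visit of a job to a server costs at least a constant $c$ in flow time plus attributable energy. I would prove this from the paper's per-server lower bound, which writes the cost as $\sum_v f_v(\lambda_v)$, by checking that $f_v(\mu)\geqslant(1-\varepsilon/3)c\,\mu$ for small $\mu$. Alternatively, a direct argument works: if a visit with work $w$ lasts time $T$, convexity forces energy at least $w^2/T$ on that server, and the total $T+w^2/T$ is at least $2w$. Summing over the $L$ servers of each long route gives long-run cost at least $(1-\varepsilon)\,c\,L\lambda$.

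\emph{Step 3: upper bound on $\OPT$.} Route every job on the short route and run $u$ at a fixed speed $s$, drawing power only while busy. This is an M/M/1 queue with mean flow time $1/(s-\lambda)$ and energy rate $(\lambda/s)s^2=\lambda s$, so the cost rate is $\lambda\bigl(1/(s-\lambda)+s\bigr)$. As $\lambda\to0$ this is at most $(1+\varepsilon/3)\cdot 2\lambda$ at $s\approx1$. Dividing Step 2 by Step 3 gives a ratio of order $cL/2$. The stated constant $L/3$, together with the requirement $L\geqslant4$, would come from the exact value of $c$: with exponential rather than fixed work and state-dependent speeds, $c$ may be somewhat below $2$, and I would need $c\geqslant 2/3$. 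Some slack may also be lost when choosing finite $m$ and $\lambda$.

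The main obstacle is Step 2, specifically the per-visit lower bound $c$. When jobs share a server, the energy cannot simply be attributed job by job, and the speed can react to the queue state. I would first try to read $c$ off the paper's single-server lower-bound function at vanishing load; this handles sharing automatically, since at load $y/m\to0$ the function should be linear in the load. If that function is only available later in the paper, I would fall back on the direct convexity argument, applied to a server's busy periods, which are almost surely single-job as $\lambda\to0$.
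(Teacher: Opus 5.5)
Your proof is correct and follows the paper's three steps: solve the program of~\cite{VN20a} in closed form; bound every policy with that routing by $\sum_vF_v(\rho_v)$ from Proposition~\ref{prop:lb}, where every load is below $s^*=1$, so $F_v(\rho_v)=2\rho_v$; and bound $\OPT$ by the short route at the gated static speed $\lambda+1$, which costs $G(\lambda)=\lambda(\lambda+2)$ per server. The difference is the instance.

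The paper makes all routes share an origin $\mathsf{o}$ and a destination $\mathsf{d}$, so its short route has $3$ servers. That is the sole source of the constant $L/3$ and of $L\geqslant4$ (i.e.\ $K\geqslant2$), and it buys a genuine source--destination instance. Your one-server short route with vertex-disjoint long routes is allowed by the model and gives more. There $x^{\VN}_0=L\lambda/(m+L)$ and every long-route server has load $\lambda/(m+L)$. Hence $J(\pi)\geqslant2\lambda L(m+1)/(m+L)$, and the ratio is at least $2L(m+1)/((\lambda+2)(m+L))\to L$, for every $L\geqslant2$.

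So your speculation that $L/3$ reflects a per-visit constant $c$ below $2$ is mistaken. Lemma~\ref{lem:node-lb} gives $c=c^*_v=2$ exactly against every admissible policy, regardless of the exponential work or state-dependent speeds; cite it and commit to the sharper bound.

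Your fallback per-visit convexity argument double counts energy when visits overlap. It can be repaired by charging each job only its own speed share and using $P(\sum_js_j)\geqslant\sum_jP(s_j)$. The claim that busy periods are single-job as $\lambda\to0$ is unjustified, since a policy may idle a server and batch jobs. The clean version works per server over its busy time: $N_v\geqslant1$ there, Jensen bounds the energy, and (A2) gives the long-run work rate. This is exactly Lemma~\ref{lem:node-lb}.
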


The instance behind Proposition~\ref{prop:informal} is shown in Figure~\ref{fig:instance}. A job can use a short route through $3$ servers, or one of $m$ long routes through $L$ servers each. Since the power function is convex, spreading the traffic over many routes lowers the load, and hence the energy, of every server; the routing rule of~\cite{VN20a}, which minimizes the energy alone, therefore sends most of the traffic over the long routes when $m$ is large. However, every server at which a job is processed costs something, even at a very low load: while the job is processed there, it contributes to the flow time, and the server consumes energy (at least $2$ per unit of work for $P(s)=s^2$, as in the example above). So at low loads, a long route costs about $L/3$ times as much as the short route, and it is better to send every job along the short route, which is what our algorithm does.

\begin{figure}[t]
\centering
\resizebox{0.85\linewidth}{!}{%
\begin{tikzpicture}[x=1cm,y=1cm,
  srv/.style={circle,draw,thick,minimum size=8mm,inner sep=0pt,font=\small},
  lnk/.style={->,>=stealth,thick}]
\node[srv] (s) at (0,0) {$\mathsf{o}$};
\node[srv] (d) at (8,0) {$\mathsf{d}$};
\node[srv] (a0) at (4,1.3) {$a_0$};
\draw[lnk] (s) -- (a0); \draw[lnk] (a0) -- (d);
\foreach \j/\y in {1/-0.2,m/-1.6}{
  \node[srv] (b1\j) at (2,\y) {\scriptsize $a_{\j,1}$};
  \node[srv] (b2\j) at (3.5,\y) {\scriptsize $a_{\j,2}$};
  \node (dots\j) at (4.9,\y) {$\cdots$};
  \node[srv] (bK\j) at (6.2,\y) {\scriptsize $a_{\j,K}$};
  \draw[lnk] (s) -- (b1\j); \draw[lnk] (b1\j) -- (b2\j); \draw[lnk] (b2\j) -- (dots\j);
  \draw[lnk] (dots\j) -- (bK\j); \draw[lnk] (bK\j) -- (d);
}
\node at (4.1,-0.85) {$\vdots$};
\node[font=\small,anchor=west] at (8.5,1.3) {short route};
\node[font=\small,anchor=west] at (8.5,-0.9) {$m$ long routes};
\end{tikzpicture}}
\caption{The instance of Propositions~\ref{prop:informal} and~\ref{prop:separation}: a single flow from $\mathsf{o}$ to $\mathsf{d}$ with one short route through $3$ servers and $m$ long routes through $L=K+2$ servers each.}
\label{fig:instance}
\end{figure}

We highlight two consequences of these results.
\begin{enumerate}
\item Firstly, our guarantee does not depend on the network, the routes, the number of flows, or the arrival rates. Our routing rule differs from that of~\cite{VN20a} only in the objective of a convex program, and every server runs at the best \emph{gated static} speed, i.e., the fixed speed, used only while the server has jobs, which minimizes its cost given its load.
\item Secondly, even when $\thnet=1$, our bound improves the factor $\theta_\alpha$ of~\cite{VN20a} ($5$ for $\alpha=2$, or $2$ with the speed rule of~\cite{VN20b}) to $\varphi$. Numerically, $\kappa(1.5)\approx1.325$, $\kappa(3)\approx2.148$, and $\kappa(10)\approx5.064$, while $\theta_\alpha=3+\max\{1,2\alpha-2\}$ equals $4$, $7$, and $21$, respectively.
\end{enumerate}

\subsection{Related Work}

\paragraph{Speed scaling on one or more servers.} Yao, Demers, and Shenker~\cite{YDS95} introduced speed scaling for jobs with deadlines, where the goal is to minimize the energy while completing every job by its deadline, and Bansal, Kimbrel, and Pruhs~\cite{BKP07} gave improved online algorithms for this problem. Albers and Fujiwara~\cite{AF07} introduced the flow time plus energy objective; see the survey~\cite{Alb10}. For a single server, SRPT with a speed based on the number of outstanding jobs is $2$-competitive for every power function~\cite{ALW10}. Very recently, Antoniadis et al.~\cite{AGHV26} showed that minimizing the total flow time plus energy offline is NP-hard, e.g., for unit-weight jobs of arbitrary sizes, while it becomes polynomial-time solvable when a completion-time ordering of the jobs is given. For multiple servers, Vaze and Nair studied SRPT with speed scaling~\cite{VN20b}, a sum power constraint~\cite{VN21a}, and MapReduce-type precedence constraints~\cite{9695966}; in the last setting, their competitive ratios do not depend on the number of jobs or servers. Bampis et al.~\cite{BKLLS18} gave a framework based on configuration linear programs and randomized rounding for several energy minimization problems, including power-aware job shops and the routing of demands in a network so as to minimize the power; see also~\cite{AFZZ12} for routing with speed-scalable links, and Gupta, Krishnaswamy, and Pruhs~\cite{GKP12}, who analyzed online routing with speed-scalable routers, where the goal is to minimize the sum of the $\alpha$-th powers of the edge loads, via online primal-dual techniques. These works on networks minimize the energy, possibly under deadlines, and do not model the flow time.

\paragraph{Stochastic models.} In the stochastic setting, the speed of a server is the service rate of a queue. Wierman, Andrew, and Tang~\cite{WAT09} compared static and dynamic speeds for a processor-sharing queue, and Andrew, Lin, and Wierman~\cite{ALW10} showed that, for a single server with a known mean workload, gated static speeds are nearly optimal. George and Harrison~\cite{GH01} showed how to compute the optimal state-dependent service rates of an M/M/1 queue (a single server with Poisson arrivals and exponentially distributed service times) with a cost of effort, and compared the resulting cost with that of the best static rate; see also~\cite{AS06}. Our algorithm uses static speeds, and our analysis bounds their loss against every dynamic policy on a network. In data centers, Baik and Ko~\cite{BK23} studied the joint control of workload routing and server speeds with a robust queueing-theoretic approach, and Chen, Xu, and Xu~\cite{CXX24} studied dynamic data center provisioning through speed scaling from a primal-dual perspective. For tandem networks with a bank of parallel servers at every stage, Vaze and Nair~\cite{VN19} showed that random routing with gated static speeds is constant competitive, and they extended this approach to general networks in~\cite{VN20a}.

\paragraph{Predictions and uncertainty.} Learning-augmented algorithms for speed scaling with deadlines were given in~\cite{BMRS20,AJS22}, and Balkanski et al.~\cite{BPSW23} gave a general framework for energy-efficient scheduling with predictions, which also covers flow time. Bampis et al.~\cite{BDKLP21} studied speed scaling when the workload of a job can only be learned by paying for a query.

\paragraph{Queueing networks.} We use Kelly's product-form theorem~\cite{Kel75,Kel79} (see~\cite{HB13} for an introduction). Choosing routes and static speeds resembles Kleinrock's capacity and flow assignment problem~\cite{Kle76}.

\subsection{Proof Overview}

Fix any policy and a server $v$ at which work arrives at a long-run rate $\rho_v$. There are two simple lower bounds on the cost at $v$, both used by Vaze and Nair~\cite{VN19,VN20a}: every unit of work processed at speed $\sigma$ costs at least $(1+P_v(\sigma))/\sigma$ (at least one job is present while $v$ is busy), which is at least $c^*_v\defeq\min_{\sigma>0}(1+P_v(\sigma))/\sigma$, the smallest possible cost of processing one unit of work at $v$ (e.g., $c^*_v=2$ for $P_v(s)=s^2$, attained at speed~$1$), and by Jensen's inequality the average power is at least $P_v(\rho_v)$. We combine them into a single bound $F_v(\rho_v)$, for a convex function $F_v$ with $F'_v(\rho)=\max\{c^*_v,P'_v(\rho)\}$ (Lemmas~\ref{lem:F} and~\ref{lem:node-lb}), and route by the resulting convex program. The routing program of~\cite{VN20a} minimizes $\sum_vP_v(\rho_v)$, whose marginal cost vanishes at low loads, so it does not ``see'' how many servers a route has (Proposition~\ref{prop:informal}). With static speeds, Kelly's theorem makes every server behave like an M/M/1 queue, whose average number of jobs and busy fraction have simple closed forms, whatever the topology, so the cost of our algorithm at $v$ is an explicit function $G_v(\rho_v)$. Thus $F_v$ is a lower bound for every policy and $G_v$ is the cost of our algorithm, both at a single server as a function of its load, and the competitive ratio is the largest ratio $G_v(\rho)/F_v(\rho)$.

\paragraph{Why the network plays no role.} The key ingredient is that three quantities split over the servers in the same way, each server contributing a function of its own load alone: the lower bound on the cost of every policy (Proposition~\ref{prop:lb}), the objective of our routing program, which is exactly this lower bound, and the cost of our algorithm (Lemma~\ref{lem:alg-stoch}, by Kelly's theorem). Hence our routing loses nothing with respect to the lower bound, and the ratio is at most $\max_v\sup_\rho G_v(\rho)/F_v(\rho)$, which depends only on the power functions. The network only determines which loads are possible, and it affects our algorithm and the lower bound in exactly the same way. In contrast, the routing program of~\cite{VN20a} optimizes the energy alone, while the flow time grows with the length of the routes; this mismatch between what the routing optimizes and what the jobs pay is what the factor $\thnet$ accounts for.

\section{Preliminaries}\label{sec:prelims}

\paragraph{Notation.} We use $[n]\defeq\{1,2,\ldots,n\}$. A \emph{route} is a finite sequence of servers $p=(v_1,\ldots,v_L)$, possibly with repetitions; $m_p(v)$ denotes the number of times $p$ visits server $v$, and $L$ is the \emph{length} of $p$. For functions $f,g$ on the same domain, we write $f\leqslant g$ if $f(t)\leqslant g(t)$ for all $t$. We use $\varphi\defeq(1+\sqrt5)/2$, which satisfies $\varphi-1=1/\varphi$.

\paragraph{Servers.} There is a finite set $V$ of servers. Server $v$ has a power function $P_v:[0,\infty)\to[0,\infty)$, which is convex, nondecreasing, and continuously differentiable, with $P_v(0)=0$ and $P_v(s)/s\to\infty$ as $s\to\infty$. At every time $t$, server $v$ runs at a speed $s_v(t)\geqslant0$, consuming power $P_v(s_v(t))$, and it can divide its speed among the jobs present at it. The main example is $P_v(s)=\gamma_vs^{\alpha_v}$ with $\gamma_v>0$ and $\alpha_v>1$.

\begin{fact}[Jensen's inequality]\label{fact:jensen}
For a probability measure $\mu$, a $\mu$-integrable function $g$, and a convex $f$, we have $f\left(\int g\,d\mu\right)\leqslant\int f(g)\,d\mu$.
\end{fact}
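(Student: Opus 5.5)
The plan is the standard supporting-line argument. Let $I\subseteq\mathbb{R}$ be the interval on which $f$ is convex and which contains the values of $g$, and set $m\defeq\int g\,d\mu$. Since $\mu$ is a probability measure and $g$ takes values in the convex set $I$, we have $m\in\overline{I}$; in fact $m\in I$, as the boundary case below shows. The key step is to find a \emph{supporting line} of $f$ at $m$: a slope $a\in\mathbb{R}$ with
\[
f(x)\geqslant f(m)+a(x-m)\qquad\text{for all }x\in I .
\]

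\textbf{Case 1: $m$ is an interior point of $I$.} Convexity makes the difference quotient $(f(x)-f(m))/(x-m)$ nondecreasing in $x$. Hence the left derivative $f'_-(m)$ and the right derivative $f'_+(m)$ exist, are finite, and satisfy $f'_-(m)\leqslant f'_+(m)$. Any $a\in[f'_-(m),f'_+(m)]$ then gives the inequality above.

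\textbf{Case 2: $m$ is an endpoint of $\overline{I}$,} say $m=\inf I$. Then $g\geqslant m$ everywhere and $\int(g-m)\,d\mu=0$, so $g=m$ $\mu$-almost everywhere. In particular $m\in I$, and both sides of the claimed inequality equal $f(m)$. This case therefore needs no supporting line, and I would dispose of it first.

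Next, substitute $x=g(\omega)$ and integrate. The function $f(g)$ is bounded below by the integrable affine function $f(m)+a(g-m)$. So its negative part is integrable, and $\int f(g)\,d\mu$ is well defined in $(-\infty,+\infty]$. Monotonicity and linearity of the integral then give
\[
\int f(g)\,d\mu\;\geqslant\; f(m)+a\Bigl(\int g\,d\mu-m\Bigr)=f(m)=f\Bigl(\int g\,d\mu\Bigr),
\]
which is the claim. If $\int f(g)\,d\mu=+\infty$, the inequality holds trivially.

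The only genuine obstacle is the existence of the supporting line, namely the monotonicity of the slopes of a convex function. This follows directly from the definition of convexity by writing the middle point of any three points $x<y<z$ as a convex combination of the outer two. The remaining steps are routine measure theory. In the paper's applications, $f$ is a $C^1$ power function $P_v$ on $[0,\infty)$, so one can simply take $a=f'(m)$; the endpoint case $m=0$ forces $g=0$ almost everywhere and is trivial.
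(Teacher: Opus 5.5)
Your proof is correct: it is the standard supporting-line argument. The endpoint case is handled properly, since $g\geqslant m$ together with $\int(g-m)\,d\mu=0$ forces $g=m$ almost everywhere, and allowing $\int f(g)\,d\mu=+\infty$ is also handled properly. The paper gives no proof of this statement at all. It quotes Jensen's inequality as a standard fact and uses it only once, in Lemma~\ref{lem:node-lb}, with $f=P_v$ and $\mu$ the uniform measure on the busy times; there, as you note, the tangent line with slope $P_v'(m)$ suffices. The one detail you leave implicit is that $f\circ g$ is measurable, which holds because a convex function on an interval is continuous on its interior and hence Borel.
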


\section{Stochastic Network Speed Scaling}\label{sec:stoch}

\subsection{Model}

\begin{definition}[Stochastic network speed scaling]\label{def:stoch}
An instance is given as follows.
\begin{enumerate}
\item \emph{Flows.} There are $D$ flows. Flow $i\in[D]$ has a finite set $\cP_i$ of allowed routes, and its jobs arrive according to a Poisson process with rate $\lambda_i>0$, independently of the other flows.
\item \emph{Service.} A job of flow $i$ which is routed along $p=(v_1,\ldots,v_L)\in\cP_i$ has to be processed at $v_1$, then at $v_2$, and so on, and it departs after its processing at $v_L$. At every visit to a server, the job brings an amount of work which is exponentially distributed with mean $1$, independently of everything else.
\end{enumerate}
\end{definition}

The independence of the works at different visits (as in~\cite{VN19,VN20a}) makes the product-form analysis below possible; it is the main limitation of this setting, since, e.g., a packet has the same size at every hop.

A policy decides the route of every job, how every server shares its speed among the jobs present at it, and the speeds. Let $N(t)$ be the number of jobs in the network, and $N_v(t)$ the number of jobs located at server $v$ (waiting or in service), so that $N(t)\geqslant\sum_vN_v(t)$. Starting from an empty network, the cost of a policy $\pi$ is
\begin{equation}\label{eq:cost}
J(\pi)\defeq\limsup_{T\to\infty}\frac1T\int_0^T\Big(N(t)+\sum_{v\in V}P_v\big(s_v(t)\big)\Big)dt\,,
\end{equation}
the long-run average of the number of jobs plus the total power. This is a random variable, and all our statements about it hold almost surely, i.e., with probability~$1$. By Little's law~\cite{HB13}, $J(\pi)/\sum_i\lambda_i$ is the mean response time plus the mean energy per job, which is the objective of~\cite{VN20a}.

A \emph{routing} is a vector $x=(x_{i,p})$, where $x_{i,p}$ is the rate at which jobs of flow $i$ are routed along $p$; the set of routings is $\cX\defeq\{x\geqslant0:\sum_{p\in\cP_i}x_{i,p}=\lambda_i\ \forall i\}$. The \emph{load} of server $v$ is the rate at which jobs visit it, $\rho_v(x)\defeq\sum_i\sum_{p\in\cP_i}m_p(v)x_{i,p}$. Throughout, $\lambda$ denotes arrival rates of flows and $\rho$ denotes loads of servers. Following~\cite{VN19,VN20a}, who restrict attention to policies which admit long-run averages and do not use the work of a job before processing it, we compare against the following class. Its conditions only ask that the long-run routing rates exist and that every server keeps up with its work, which is what our lower bound needs. A policy $\pi$ is \emph{admissible} if, almost surely:
\begin{itemize}
\item[(A1)] for every flow $i$ and $p\in\cP_i$, the limit $x^\pi_{i,p}\defeq\lim_{T\to\infty}A_{i,p}(T)/T$ exists, where $A_{i,p}(T)$ is the number of jobs of flow $i$ arriving in $[0,T]$ and routed along $p$ (then $x^\pi\in\cX$ whenever $J(\pi)<\infty$, since jobs which are never routed stay in the network); and
\item[(A2)] for every server $v$, in the long run, work is processed at $v$ at least at the rate at which it arrives:
\[
\liminf_{T\to\infty}\frac1T\int_0^Ts_v(t)\,\ind\{N_v(t)\geqslant1\}\,dt\geqslant\rho_v(x^\pi)\,.
\]
\end{itemize}
For example, (A2) holds whenever $\pi$ does not use the work of a visit before processing it, $v$ receives $\rho_v(x^\pi)T+o(T)$ visits in $[0,T]$, and the work left at $v$ at time $T$ is $o(T)$. Indeed, the work $W_k$ of the $k$-th visit to $v$ is then independent of the history up to the start of that visit, so $\sum_{k\leqslant n}(W_k-1)$ is a martingale with increments of variance $1$, which is $o(n)$ almost surely; hence the work processed at $v$ by time $T$, which is at most $\int_0^Ts_v\ind\{N_v\geqslant1\}$, is $\rho_v(x^\pi)T-o(T)$. We use $\OPT$ to denote the infimum of the constants $c$ such that some admissible policy $\pi$ has $J(\pi)\leqslant c$ almost surely, and the competitive ratio of an admissible policy $A$ is $J(A)/\OPT$. Vaze and Nair~\cite{VN19,VN20a} use stationary expectations instead of almost-sure time averages; the two agree for Algorithm~\ref{alg:stoch} (Lemma~\ref{lem:alg-stoch}). As in their work, policies which know the work of a visit in advance are not covered. On the other hand, (A1) and (A2) do not restrict what a policy knows about the arrivals, so the lower bound of Proposition~\ref{prop:lb}, and hence our competitive ratio, also holds against policies which know all the future arrival times in advance.

\begin{fact}[Kelly's theorem~\cite{Kel75,Kel79}]\label{fact:kelly}
Consider an open network of single-server first-come-first-served (FCFS) nodes, with finitely many classes of jobs. Jobs of class $c$ arrive according to independent Poisson processes with rate $a_c$ and follow a fixed route $r_c$. At node $v$, every service time is exponential with rate $\mu_v$, independently of everything else. If $\rho_v\defeq\sum_ca_cm_{r_c}(v)<\mu_v$ for every $v$, then the network is a positive recurrent Markov process, and in its stationary distribution $\Pr[N_v=n]=(1-\rho_v/\mu_v)(\rho_v/\mu_v)^n$; hence $\E[N_v]=\rho_v/(\mu_v-\rho_v)$ and $\Pr[N_v\geqslant1]=\rho_v/\mu_v$. Moreover, the network regenerates whenever it becomes empty, so by the renewal--reward theorem~\cite{Asm87}, the time averages in~\eqref{eq:cost} converge almost surely to the stationary expectations, which are finite.
\end{fact}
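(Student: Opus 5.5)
We would prove Fact~\ref{fact:kelly} in the classical way: write the network as a continuous-time Markov chain, guess a product-form stationary distribution, and check it with Kelly's reversed-process lemma. The state at node $v$ is the ordered FCFS queue, recorded as a list of pairs $(c,k)$. Here $c$ is the class of a job and $k$ is the index of the current visit along $r_c$; by definition $r_c$ has its $k$-th entry equal to $v$. The service rate at $v$ is $\mu_v$ regardless of class, so the state $x=(x_v)_{v\in V}$ is Markov. We take as candidate
\[
\pi(x)=\prod_{v\in V}\Big(1-\frac{\rho_v}{\mu_v}\Big)\prod_{(c,k)\in x_v}\frac{a_c}{\mu_v}\,,
\]
so that each job present at $v$ contributes $a_c/\mu_v$. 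This reflects that the stream of class-$c$ jobs making their $k$-th visit has rate $a_c$.

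To verify stationarity, we define a candidate reversed chain in which every class follows its route backwards. In the reversed chain, jobs join at the head of the FCFS line and the job at the tail is served. All other rates stay $a_c$ and $\mu_v$. We then check the three cases of Kelly's lemma.
\begin{itemize}
\item An external arrival of class $c$ to its first node is reversed by a final departure. The balance $\pi(x)a_c=\pi(y)\mu_v$ holds because $\pi(y)/\pi(x)=a_c/\mu_v$.
\item An internal transfer from $v$ to $w$ is reversed by the transfer from $w$ back to $v$. The balance $\pi(x)\mu_v=\pi(y)\mu_w$ holds because the job's factor changes from $a_c/\mu_v$ to $a_c/\mu_w$.
\item A final departure is reversed by an arrival, which is the first case again.
\end{itemize}
Besides these, we must check that the total outflow rate of each state is the same forward and backward. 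Forward, it is $\sum_c a_c+\sum_{v:x_v\neq\emptyset}\mu_v$. Backward, the arrival streams of the reversed chain have total rate $\sum_c a_c$, and each nonempty node serves at rate $\mu_v$. This bookkeeping over the indices $(c,k)$, especially with routes that revisit a node, is the step we expect to need the most care.

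Normalizability follows by summing over the orderings at each node. Summing $\prod a_c/\mu_v$ over all length-$n$ lists of pairs $(c,k)$ with $r_c(k)=v$ gives $(\sum_c a_cm_{r_c}(v)/\mu_v)^n=(\rho_v/\mu_v)^n$. This yields the geometric marginal $\Pr[N_v=n]=(1-\rho_v/\mu_v)(\rho_v/\mu_v)^n$ and total mass $1$ when every $\rho_v<\mu_v$. The formulas $\E[N_v]=\rho_v/(\mu_v-\rho_v)$ and $\Pr[N_v\geqslant1]=\rho_v/\mu_v$ follow immediately from this marginal. The chain is irreducible on the states reachable from the empty state, and its jump rates are bounded, so it is non-explosive. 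An irreducible non-explosive chain with a stationary distribution is positive recurrent.

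For the almost-sure statement, we use the successive returns to the empty state as regeneration points. Cycles are i.i.d., and the mean cycle length is finite by positive recurrence. The cost integrand is $N(t)+\sum_vP_v(\mu_v)\ind\{N_v(t)\geqslant1\}$, and its stationary mean is finite by the formulas above. The expected reward per cycle therefore equals the mean cycle length times this stationary mean, which is finite. The renewal--reward theorem then gives almost-sure convergence of the time averages in~\eqref{eq:cost} to the stationary expectations.
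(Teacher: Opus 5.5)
Your proposal is correct. It is the classical argument behind this Fact, which the paper does not prove but simply cites: Kelly's reversed-process lemma, with every class traversing its route backwards and the visit index $k$ recorded to handle repeated servers (as in~\cite{Kel79}), followed by renewal--reward over returns to the empty state (as in~\cite{Asm87}). The one step that needs an extra line is positive recurrence, because you normalize $\pi$ over all configurations but claim irreducibility only on the states reachable from $\emptyset$. Either show that every configuration is reachable: in your reversed chain any state can be emptied by services alone, and reversing that path through the bijection between forward and reversed transitions gives a forward path from $\emptyset$. Or use that, for a chain with bounded rates, an invariant distribution with $\pi(\emptyset)>0$ already makes $\emptyset$ positive recurrent.
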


\subsection{A Lower Bound on the Cost of Any Policy}

\begin{definition}\label{def:F}
For a server $v$ and $\sigma>0$, let $h_v(\sigma)\defeq(1+P_v(\sigma))/\sigma$, let $s^*_v$ be the smallest minimizer of $h_v$, and $c^*_v\defeq h_v(s^*_v)$. Let $F_v(0)\defeq0$ and $F_v(\rho)\defeq\rho\cdot\min_{\sigma\geqslant\rho}h_v(\sigma)$ for $\rho>0$.
\end{definition}

If $v$ processes a single job at a constant speed $\sigma$, then every unit of work takes time $1/\sigma$, during which the job contributes $1/\sigma$ to the flow time, and $v$ consumes energy $P_v(\sigma)/\sigma$. Thus $c^*_v$ is the smallest possible cost of processing one unit of work at $v$. In words, $F_v(\rho)$ is a lower bound on the cost per unit time at $v$ (the number of jobs at $v$ plus the power of $v$) of any policy which sends work to $v$ at rate $\rho$; this is Lemma~\ref{lem:node-lb} below.

\begin{lemma}\label{lem:F}
The minimum of $h_v$ is attained, $c^*_v=P'_v(s^*_v)$, and $F_v(\rho)=c^*_v\rho$ for $\rho\leqslant s^*_v$ and $F_v(\rho)=1+P_v(\rho)$ for $\rho\geqslant s^*_v$. Moreover, $F_v$ is convex, nondecreasing, and continuously differentiable with $F'_v(\rho)=\max\{c^*_v,P'_v(\rho)\}$.
\end{lemma}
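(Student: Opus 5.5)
We first show that the minimum of $h_v(\sigma)=(1+P_v(\sigma))/\sigma$ over $\sigma>0$ is attained. As $\sigma\to0^+$ we have $1+P_v(\sigma)\to1$, so $h_v(\sigma)\to\infty$. As $\sigma\to\infty$, the assumption $P_v(\sigma)/\sigma\to\infty$ gives $h_v\to\infty$. Since $h_v$ is continuous on $(0,\infty)$, it attains its minimum on a compact subinterval, and the set of minimizers is closed and bounded away from $0$. Hence the smallest minimizer $s^*_v$ exists.

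Next, write $h'_v(\sigma)=\bigl(\sigma P'_v(\sigma)-1-P_v(\sigma)\bigr)/\sigma^2$ and set $g(\sigma)\defeq\sigma P'_v(\sigma)-P_v(\sigma)-1$. At the interior minimizer, $h'_v(s^*_v)=0$, so $P'_v(s^*_v)=(1+P_v(s^*_v))/s^*_v=c^*_v$.

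The key structural fact is that $g$ is nondecreasing, so $h_v$ is quasiconvex. For $\sigma<\tau$, convexity of $P_v$ gives $P_v(\tau)-P_v(\sigma)\leqslant\tau P'_v(\tau)-\sigma P'_v(\tau)$. Therefore
\[
g(\tau)-g(\sigma)\geqslant\tau P'_v(\tau)-\sigma P'_v(\sigma)-(\tau-\sigma)P'_v(\tau)=\sigma\bigl(P'_v(\tau)-P'_v(\sigma)\bigr)\geqslant0 .
\]
Since $g(0^+)=-1<0$, the function $h_v$ is nonincreasing where $g\leqslant0$ and nondecreasing where $g\geqslant0$. Thus $h_v$ is nonincreasing on $(0,s^*_v]$ and nondecreasing on $[s^*_v,\infty)$. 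The one care point is possible flat pieces where $g=0$. These do not matter: on such pieces $h_v$ stays constant at its minimum $c^*_v$, so $h_v\geqslant c^*_v$ everywhere and $h_v$ is nondecreasing on $[s^*_v,\infty)$.

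This gives the formula for $F_v$ in two cases:
\begin{itemize}
\item If $\rho\leqslant s^*_v$, then $s^*_v$ lies in $[\rho,\infty)$, so $\min_{\sigma\geqslant\rho}h_v(\sigma)=c^*_v$ and $F_v(\rho)=c^*_v\rho$.
\item If $\rho\geqslant s^*_v$, then $h_v$ is nondecreasing on $[\rho,\infty)$, so the minimum is $h_v(\rho)$ and $F_v(\rho)=\rho h_v(\rho)=1+P_v(\rho)$.
\end{itemize}
The two formulas agree at $s^*_v$, because $c^*_vs^*_v=1+P_v(s^*_v)$. At $\rho=0$ the first formula gives $0$, matching $F_v(0)=0$.

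For the regularity claims, compute the derivative piece by piece. It is $c^*_v$ on $[0,s^*_v]$ and $P'_v(\rho)$ on $[s^*_v,\infty)$. These match at $s^*_v$ because $P'_v(s^*_v)=c^*_v$, so $F_v$ is $C^1$. Since $P'_v$ is nondecreasing and equals $c^*_v$ at $s^*_v$:
\begin{itemize}
\item for $\rho\leqslant s^*_v$ we have $P'_v(\rho)\leqslant c^*_v$, so the derivative $c^*_v$ equals $\max\{c^*_v,P'_v(\rho)\}$;
\item for $\rho\geqslant s^*_v$ we have $P'_v(\rho)\geqslant c^*_v$, so the derivative $P'_v(\rho)$ equals $\max\{c^*_v,P'_v(\rho)\}$.
\end{itemize}
Hence $F'_v=\max\{c^*_v,P'_v\}$. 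This derivative is nondecreasing and nonnegative, so $F_v$ is convex and nondecreasing.

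The main obstacle is the quasiconvexity argument, specifically the monotonicity of $g$ and the handling of possible flat pieces in $h_v$. Everything else is direct computation.
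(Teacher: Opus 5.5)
Your proof is correct and follows essentially the same route as the paper. Both show that $\sigma P'_v(\sigma)-P_v(\sigma)$ is nondecreasing by convexity, read off from the sign of $h'_v$ that $h_v$ is nonincreasing on $(0,s^*_v]$ and nondecreasing on $[s^*_v,\infty)$, and then derive the piecewise formula and $F'_v=\max\{c^*_v,P'_v\}$. Your extra care about the existence of a smallest minimizer and about flat pieces is harmless but not needed.
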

\begin{proof}
We drop the subscript $v$. Let $q(\sigma)\defeq\sigma P'(\sigma)-P(\sigma)$. For $\sigma_1<\sigma_2$, convexity gives $P(\sigma_2)-P(\sigma_1)\leqslant(\sigma_2-\sigma_1)P'(\sigma_2)$, so $q(\sigma_2)-q(\sigma_1)\geqslant\sigma_1(P'(\sigma_2)-P'(\sigma_1))\geqslant0$, i.e., $q$ is nondecreasing. Also $h'(\sigma)=(q(\sigma)-1)/\sigma^2$. Since $h(\sigma)\geqslant1/\sigma$ and $h(\sigma)\geqslant P(\sigma)/\sigma$, $h\to\infty$ at $0$ and at $\infty$, so its minimum is attained at some $s^*$ with $q(s^*)=1$, i.e., $s^*P'(s^*)=1+P(s^*)$, which gives $P'(s^*)=h(s^*)=c^*$. As $q$ is nondecreasing, $h$ is nonincreasing on $(0,s^*]$ and nondecreasing on $[s^*,\infty)$. Hence $\min_{\sigma\geqslant\rho}h(\sigma)$ equals $c^*$ if $\rho\leqslant s^*$, and $h(\rho)$ otherwise, which gives the formula for $F$. Its derivative is $c^*$ on $(0,s^*)$ and $P'$ on $(s^*,\infty)$; both equal $c^*$ at $s^*$, and since $P'$ is nondecreasing with $P'(s^*)=c^*$, we get $F'=\max\{c^*,P'\}$, which is continuous, positive, and nondecreasing.
\end{proof}

For example, for $P(s)=s^2$, we have $s^*=1$, $c^*=2$, and $F(\rho)=2\rho$ for $\rho\leqslant1$ and $F(\rho)=1+\rho^2$ for $\rho\geqslant1$. The formula $F'_v=\max\{c^*_v,P'_v\}$ says that the marginal cost of pushing traffic through $v$ is the larger of the cost $c^*_v$ of processing a job there and the marginal power; so at low loads, a route with $L$ servers costs about $L$ times as much as a route with one server.

The two regimes of $F_v$ correspond to the two lower bounds of Vaze and Nair~\cite[Lemmas~25 and~26]{VN20a} (see also~\cite[Lemmas~14 and~15]{VN19}): the cost $c^*_v\rho$ of processing all the work in isolation ($c^*_v$ is the ``hop weight'' of~\cite{VN20a}), and the power $P_v(\rho)$. $F_v$ is at least their maximum, and larger above $s^*_v$; with their maximum, the proof of Theorem~\ref{thm:phi} would give $2$ instead of $\varphi$. More importantly, we use $F_v$ inside the routing program, whereas that of~\cite{VN20a} only uses $P_v$.

\begin{lemma}[Lower bound at a server]\label{lem:node-lb}
For every admissible policy $\pi$ and server $v$, almost surely,
\[
\liminf_{T\to\infty}\frac1T\int_0^T\Big(N_v(t)+P_v\big(s_v(t)\big)\Big)dt\geqslant F_v\big(\rho_v(x^\pi)\big)\,.
\]
\end{lemma}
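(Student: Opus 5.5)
The plan is to bound the integrand pointwise in time by a quantity involving only the busy indicator and the work rate, and then apply Jensen's inequality to the time averages. Write $\rho\defeq\rho_v(x^\pi)$; if $\rho=0$ there is nothing to prove since $F_v(0)=0$, so assume $\rho>0$. Let $b(t)\defeq\ind\{N_v(t)\geqslant1\}$ and $w(t)\defeq s_v(t)b(t)$ be the rate at which work is processed at $v$. Since $N_v(t)\geqslant b(t)$ and $P_v\geqslant0$ with $P_v(s_v(t))\geqslant P_v(w(t))$ (monotonicity), we have pointwise
\[
N_v(t)+P_v\big(s_v(t)\big)\geqslant b(t)\big(1+P_v(s_v(t))\big)\,,
\]
so it suffices to lower bound $\liminf_T\frac1T\int_0^T b(t)(1+P_v(s_v(t)))\,dt$.

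Next I use the function $F_v$ from Lemma~\ref{lem:F}, extended in the natural way: for $\beta\in(0,1]$ and $\omega\geqslant0$, the quantity $\beta(1+P_v(\omega/\beta))$ is the perspective of the convex function $1+P_v$, hence jointly convex in $(\beta,\omega)$. Fix $T$ and let $B_T\defeq\frac1T\int_0^Tb$ and $W_T\defeq\frac1T\int_0^Tw$. If $B_T>0$, Jensen's inequality (Fact~\ref{fact:jensen}) applied to the probability measure $b(t)\,dt/(TB_T)$ and the convex $P_v$ gives
\[
\frac1T\int_0^Tb(t)\big(1+P_v(s_v(t))\big)dt\geqslant B_T\Big(1+P_v\big(W_T/B_T\big)\Big)=W_T\,h_v\big(W_T/B_T\big)\,.
\]
Since $B_T\leqslant1$, the speed $\sigma_T\defeq W_T/B_T$ is at least $W_T$, so the right-hand side is at least $W_T\min_{\sigma\geqslant W_T}h_v(\sigma)=F_v(W_T)$.

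Finally pass to the limit. By (A2), $\liminf_T W_T\geqslant\rho$. By Lemma~\ref{lem:F}, $F_v$ is continuous and nondecreasing, so $\liminf_T F_v(W_T)\geqslant F_v(\rho)$; for $T$ large, $W_T>0$ and hence $B_T>0$, so the case $B_T=0$ is irrelevant. Chaining the inequalities yields the lemma almost surely, on the event where (A2) holds. The main obstacle is the middle step: recognizing that the busy fraction must be kept as a free variable and that the constraint $B_T\leqslant1$ is exactly what restricts the minimization to $\sigma\geqslant\rho$ in the definition of $F_v$. A naive bound using only $b\geqslant$ something or only Jensen on $P_v(s_v)$ would give merely $\max\{c^*_v\rho,P_v(\rho)\}$. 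A minor technicality is measurability and integrability of $s_v$; I would assume the speeds are locally integrable, as is implicit in \eqref{eq:cost}.
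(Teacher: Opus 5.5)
Your proposal is correct and follows essentially the same argument as the paper: bound the integrand by $b(1+P_v(s_v))$, then apply Jensen's inequality over the busy times to get $B_T(1+P_v(W_T/B_T))=W_T h_v(W_T/B_T)\geqslant F_v(W_T)$ using $B_T\leqslant1$, and pass to the limit with (A2) and the continuity and monotonicity of $F_v$. The perspective remark and the inequality $P_v(s_v)\geqslant P_v(w)$ are never used and can be dropped, and handling the degenerate case $W_T=0$ by noting that $W_T>0$ for large $T$ is an acceptable substitute for the paper's explicit trivial bound.
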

\begin{proof}
The idea is that, among all ways of processing the same work while busy for the same amount of time, a constant speed uses the least energy. Fix a realization and $T>0$. Let $b(t)\defeq\ind\{N_v(t)\geqslant1\}$, $B_T\defeq\frac1T\int_0^Tb$, and $W_T\defeq\frac1T\int_0^Ts_vb$ be the fraction of time $v$ is busy and the average rate at which it processes work; note $B_T\leqslant1$. If $W_T=0$, then $\frac1T\int_0^T(N_v+P_v(s_v))\geqslant0=F_v(W_T)$. Otherwise, let $\mu$ be the uniform probability measure on the busy times. Then
\begin{alignat*}{2}
\frac1T\int_0^T\Big(N_v+P_v(s_v)\Big)dt&\geqslant\frac1T\int_0^T\Big(b+P_v(s_v)\,b\Big)dt&\;&(\because N_v\geqslant b,\ P_v\geqslant0)\\
&\geqslant B_T+B_T\,P_v\left(W_T/B_T\right)&&(\text{by Fact~\ref{fact:jensen}})\\
&=W_T\cdot h_v\left(W_T/B_T\right)\geqslant F_v(W_T)\,.&&(\because W_T/B_T\geqslant W_T)
\end{alignat*}
Since $F_v$ is continuous and nondecreasing (Lemma~\ref{lem:F}), and $\liminf_TW_T\geqslant\rho_v(x^\pi)$ by (A2), the claim follows.
\end{proof}

Summing Lemma~\ref{lem:node-lb} over the servers (using $N\geqslant\sum_vN_v$, and that the $\limsup$ of a sum is at least the sum of the $\liminf$s) gives the following.

\begin{proposition}\label{prop:lb}
For every admissible policy $\pi$, almost surely,
\[
J(\pi)\geqslant\sum_{v\in V}F_v\big(\rho_v(x^\pi)\big)\geqslant\LB\defeq\min_{x\in\cX}\ \sum_{v\in V}F_v\big(\rho_v(x)\big)\,.
\]
\end{proposition}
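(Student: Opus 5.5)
The plan is to sum Lemma~\ref{lem:node-lb} over the finitely many servers, and then to use that $x^\pi$ is a feasible routing. Fix an admissible policy $\pi$. We may assume $J(\pi)<\infty$, since otherwise both inequalities are trivial: the middle quantity is finite whenever $x^\pi$ exists, and $\LB$ is a finite minimum. By (A1), $x^\pi$ then lies in $\cX$, as remarked after (A1). Let $\Omega_0$ be the intersection of the probability-one events on which (A1) holds and Lemma~\ref{lem:node-lb} holds for each $v\in V$. Since $V$ is finite, $\Omega_0$ still has probability~$1$, and we work on a fixed realization in $\Omega_0$.

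For the first inequality, use $N(t)\geqslant\sum_vN_v(t)$ to bound the integrand of~\eqref{eq:cost} pointwise from below by $\sum_v\big(N_v(t)+P_v(s_v(t))\big)$. Taking $\frac1T\int_0^T$ and then $\limsup_T$, apply the elementary fact that for finitely many real sequences $\limsup_T\sum_va_v(T)\geqslant\liminf_T\sum_va_v(T)\geqslant\sum_v\liminf_Ta_v(T)$. All the terms are nonnegative, so no $\infty-\infty$ issue arises. Lemma~\ref{lem:node-lb} bounds each $\liminf$ by $F_v(\rho_v(x^\pi))$, which gives $J(\pi)\geqslant\sum_vF_v(\rho_v(x^\pi))$.

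For the second inequality, it suffices that $x^\pi\in\cX$ and that the minimum defining $\LB$ is attained or at least is an infimum over $\cX$. The latter holds because $\cX$ is a nonempty compact polytope (a product of scaled simplices), and $x\mapsto\sum_vF_v(\rho_v(x))$ is continuous by Lemma~\ref{lem:F}, since each $\rho_v$ is linear. So the minimum exists and is at most the value at $x^\pi$.

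The only delicate point, and the main obstacle, is justifying $x^\pi\in\cX$ when $J(\pi)<\infty$. Suppose the rates $x^\pi_{i,p}$ summed to less than $\lambda_i$. By the strong law for the Poisson arrivals, a positive fraction of the jobs would then never be routed, so $N(t)$ would grow linearly and $J(\pi)=\infty$. Since the paper already records this in (A1), I will cite it rather than reprove it.
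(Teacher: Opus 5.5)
\textbf{Gap: the reduction for $J(\pi)=\infty$.} For $J(\pi)<\infty$ your argument is the paper's own: sum Lemma~\ref{lem:node-lb} over the servers using $N\geqslant\sum_vN_v$ and $\limsup\sum\geqslant\sum\liminf$, then use $x^\pi\in\cX$. Your compactness argument that the minimum is attained is a harmless addition.

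The step that is wrong is your opening claim that when $J(\pi)=\infty$ ``both inequalities are trivial.'' Only the first one is. The second inequality, $\sum_vF_v(\rho_v(x^\pi))\geqslant\LB$, does not involve $J(\pi)$ at all. Knowing that both sides are finite says nothing about their order. That inequality needs $x^\pi\in\cX$, and the remark after (A1) guarantees this only when $J(\pi)<\infty$.

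The inequality can in fact fail. Take the policy that never routes any job:
\begin{itemize}
\item it satisfies (A1) with $x^\pi=0$ and satisfies (A2) trivially, so it is admissible;
\item $J(\pi)=\infty$, because the unrouted jobs stay in the network and $N(t)$ grows linearly;
\item the middle quantity is $\sum_vF_v(0)=0$;
\item but $\LB>0$, since every flow has positive rate, every route visits some server, and $F_v(\rho)>0$ for $\rho>0$.
\end{itemize}

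So what can actually be proved is the following: the first inequality and $J(\pi)\geqslant\LB$ hold for every admissible $\pi$, and the second inequality holds whenever $J(\pi)<\infty$. This is all that Theorem~\ref{thm:stoch} uses. The paper's one-line proof relies silently on the same remark in (A1), so the defect lies in the statement as written, not in your method. Still, you should state the restriction to $J(\pi)<\infty$ explicitly rather than assert that the case $J(\pi)=\infty$ is trivial.
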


The program defining $\LB$ is convex, since $\cX$ is a polytope, $\rho_v(x)$ is linear, and every $F_v$ is convex; given oracles for $F_v$ and $F'_v$, an $x$ with $\sum_vF_v(\rho_v(x))\leqslant(1+\varepsilon)\LB$ can be computed in time polynomial in the number of servers, the number of allowed routes, and $\log(1/\varepsilon)$, e.g., by the ellipsoid method or interior point methods. For $P_v(s)=\gamma_vs^{\alpha_v}$, these oracles are explicit, since the functions $F_v$ are: the condition $s^*_vP'_v(s^*_v)=1+P_v(s^*_v)$ from the proof of Lemma~\ref{lem:F} gives $s^*_v=(\gamma_v(\alpha_v-1))^{-1/\alpha_v}$ and $c^*_v=\alpha_v\gamma_v(s^*_v)^{\alpha_v-1}$.

\subsection{The Algorithm}

For a server $v$, $\rho>0$, and $s>\rho$, let
\[
g_v(\rho,s)\defeq\frac{\rho}{s-\rho}+\frac{\rho P_v(s)}{s}\,,\qquad G_v(\rho)\defeq\min_{s>\rho}g_v(\rho,s)\,,\qquad G_v(0)\defeq0\,,
\]
and let $s_v(\rho)$ be a minimizer (it exists, since $g_v(\rho,s)\to\infty$ as $s\downarrow\rho$ and as $s\to\infty$). By Fact~\ref{fact:kelly}, $g_v(\rho,s)$ is the cost of an M/M/1 queue with arrival rate $\rho$ which runs at speed $s$ when it is non-empty and at speed $0$ otherwise: the mean number of jobs plus the busy fraction $\rho/s$ times the power $P_v(s)$. In words, $G_v(\rho)$ is the cost per unit time at $v$ of our algorithm when the load of $v$ is $\rho$ (Lemma~\ref{lem:alg-stoch}).

\begin{algorithm}[h]
\caption{Static routing with gated static speeds.}\label{alg:stoch}
\begin{algorithmic}[1]
\State Compute an optimal routing $x^*$ for $\LB$, and let $\rho^*_v\defeq\rho_v(x^*)$ for every $v\in V$.
\State Route every arriving job of flow $i$ along $p\in\cP_i$ with probability $x^*_{i,p}/\lambda_i$, independently.
\State Every server $v$ with $\rho^*_v>0$ processes its jobs in FCFS order at speed $s_v(\rho^*_v)$ when it is non-empty, and at speed $0$ otherwise (servers with $\rho^*_v=0$ are never visited).
\end{algorithmic}
\end{algorithm}

\begin{lemma}\label{lem:alg-stoch}
Algorithm~\ref{alg:stoch} is admissible, and $J(\ALG)=\sum_vG_v(\rho^*_v)$ almost surely.
\end{lemma}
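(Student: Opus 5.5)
The plan is to reduce Algorithm~\ref{alg:stoch} to the setting of Fact~\ref{fact:kelly} and then read off both admissibility and the cost.

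\emph{Reduction to a Kelly network.} Since every arriving job of flow $i$ is routed along $p$ independently with probability $x^*_{i,p}/\lambda_i$, Poisson thinning makes the jobs of flow $i$ routed along $p$ a Poisson process of rate $x^*_{i,p}$. These processes are independent across all pairs $(i,p)$. We therefore take the classes $c=(i,p)$ with $x^*_{i,p}>0$, with fixed route $r_c=p$ and rate $a_c=x^*_{i,p}$. Each server $v$ with $\rho^*_v>0$ is FCFS with speed $\mu_v\defeq s_v(\rho^*_v)$. The works are independent $\mathrm{Exp}(1)$, so service times at $v$ are independent $\mathrm{Exp}(\mu_v)$. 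The load in Fact~\ref{fact:kelly} is $\sum_c a_c m_{r_c}(v)=\rho_v(x^*)=\rho^*_v$. The minimizer satisfies $s_v(\rho)>\rho$ by definition, so $\rho^*_v<\mu_v$. Servers with $\rho^*_v=0$ are never visited and contribute nothing, consistent with $G_v(0)=0$.

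\emph{Cost.} Fact~\ref{fact:kelly} gives almost sure convergence of the time averages to stationary expectations. The relevant quantities are:
\begin{itemize}
\item $N(t)=\sum_v N_v(t)$ exactly, since every job present is located at some server.
\item $\E[N_v]=\rho^*_v/(\mu_v-\rho^*_v)$.
\item The power at $v$ equals $P_v(\mu_v)\ind\{N_v\geqslant1\}$, because the server idles at speed $0$ and $P_v(0)=0$. Its mean is therefore $(\rho^*_v/\mu_v)P_v(\mu_v)$.
\end{itemize}
Summing over $v$, the $\limsup$ in~\eqref{eq:cost} is a true limit and equals $\sum_v g_v(\rho^*_v,\mu_v)=\sum_v G_v(\rho^*_v)$.

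\emph{Admissibility.} For (A1), the strong law of large numbers for the thinned Poisson processes gives $A_{i,p}(T)/T\to x^*_{i,p}$, so $x^{\ALG}=x^*$. For (A2), $\frac1T\int_0^T s_v\ind\{N_v\geqslant1\}=\mu_v\cdot\frac1T\int_0^T\ind\{N_v\geqslant1\}$. By the renewal--reward part of Fact~\ref{fact:kelly}, applied to the bounded reward $\ind\{N_v\geqslant1\}$, this converges almost surely to $\mu_v\Pr[N_v\geqslant1]=\rho^*_v$.

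\emph{Main obstacle.} The delicate step is justifying that all the time averages, for $N_v$ and for the busy indicators, converge almost surely along the same regenerative structure. This is where the regeneration at empty epochs in Fact~\ref{fact:kelly} is used. We also need that the network starts empty, which matches the definition of $J$.
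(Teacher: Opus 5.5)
Your proposal is correct and follows essentially the same route as the paper's proof. Both use Poisson thinning into classes with fixed routes, then Kelly's theorem with exponential service at rate $s_v(\rho^*_v)>\rho^*_v$ and the closed forms for the mean queue length and busy fraction to obtain $\sum_v g_v(\rho^*_v,s_v)=\sum_v G_v(\rho^*_v)$, and finally the strong law for (A1) together with $s_v\Pr[N_v\geqslant1]=\rho^*_v$ for (A2); your extra remarks on classes with $x^*_{i,p}=0$ and on applying renewal--reward to the busy indicator only make explicit what the paper leaves implicit.
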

\begin{proof}
Since every job picks its route independently, the jobs of flow $i$ routed along $p$ form independent Poisson processes (the thinning property) with rates $x^*_{i,p}$; we treat each of them as a class with the fixed route $p$. At server $v$, every visit brings exponential work of mean $1$ and $v$ runs at speed $s_v\defeq s_v(\rho^*_v)>\rho^*_v$ while busy, so every service time is exponential with rate $s_v$. By Fact~\ref{fact:kelly}, the network is positive recurrent and, almost surely, the long-run average number of jobs at $v$ is $\rho^*_v/(s_v-\rho^*_v)$, and its long-run average power is $\Pr[N_v\geqslant1]\,P_v(s_v)=\rho^*_vP_v(s_v)/s_v$. Since $N=\sum_vN_v$ here, $J(\ALG)=\sum_vg_v(\rho^*_v,s_v)=\sum_vG_v(\rho^*_v)$. Condition (A1) holds with $x^\ALG=x^*$ by the strong law of large numbers, and (A2) holds since the long-run rate at which work is processed at $v$ is $s_v\Pr[N_v\geqslant1]=\rho^*_v$.
\end{proof}

\begin{theorem}\label{thm:stoch}
Let $\kappa_v\defeq\sup_{\rho>0}G_v(\rho)/F_v(\rho)$ and $\kappa^*\defeq\max_v\kappa_v$. For every admissible policy $\pi$, almost surely $J(\ALG)\leqslant\kappa^*\cdot J(\pi)$; i.e., Algorithm~\ref{alg:stoch} is $\kappa^*$-competitive.
\end{theorem}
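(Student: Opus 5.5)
The proof is a short chaining of Lemma~\ref{lem:alg-stoch}, the definition of $\kappa^*$, and Proposition~\ref{prop:lb}. Nothing network-dependent enters, because all three quantities split as sums over the servers of a function of that server's load.

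By Lemma~\ref{lem:alg-stoch}, almost surely $J(\ALG)=\sum_vG_v(\rho^*_v)$, where $\rho^*_v=\rho_v(x^*)$ and $x^*$ is an optimal routing for $\LB$.

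Next I bound each term by the lower-bound function at the same load. If $\rho^*_v>0$, the definition of $\kappa_v$ gives $G_v(\rho^*_v)\leqslant\kappa_v F_v(\rho^*_v)\leqslant\kappa^*F_v(\rho^*_v)$. This uses $F_v(\rho)>0$ for $\rho>0$, which holds since $F_v(\rho)=\rho\min_{\sigma\geqslant\rho}h_v(\sigma)$ and $h_v>0$. If $\rho^*_v=0$, then $G_v(0)=0=F_v(0)$ and the inequality is trivial. Summing over $v$ and using optimality of $x^*$ gives
\[
J(\ALG)\leqslant\kappa^*\sum_vF_v(\rho^*_v)=\kappa^*\LB .
\]

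Finally, Proposition~\ref{prop:lb} gives $\LB\leqslant J(\pi)$ almost surely for every admissible $\pi$. Intersecting the two almost-sure events yields $J(\ALG)\leqslant\kappa^*J(\pi)$ almost surely, and hence $J(\ALG)\leqslant\kappa^*\OPT$.

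The only points needing care are technical.
\begin{itemize}
\item \emph{Existence of the minimizer $x^*$.} $\cX$ is compact and $\sum_vF_v(\rho_v(x))$ is continuous by Lemma~\ref{lem:F}, so the minimum defining $\LB$ is attained. For the $(1+\varepsilon)$-approximate version, the same argument gives the ratio $(1+\varepsilon)\kappa^*$.
\item \emph{Finiteness of $\kappa^*$.} If $\kappa^*=\infty$ the statement is vacuous, so no finiteness argument is required here; bounding $\kappa_v$ explicitly is deferred to Theorem~\ref{thm:phi}.
\item \emph{$J(\pi)=\infty$.} The inequality then holds trivially. Otherwise $x^\pi\in\cX$ by (A1), which Proposition~\ref{prop:lb} already accounts for.
\end{itemize}

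I expect no genuine obstacle: the work lies in Lemmas~\ref{lem:node-lb} and~\ref{lem:alg-stoch}, already established.
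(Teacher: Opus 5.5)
Your proposal is correct and follows the paper's proof exactly. You apply Lemma~\ref{lem:alg-stoch}, then the termwise bound $G_v(\rho^*_v)\leqslant\kappa^*F_v(\rho^*_v)$ (trivial when $\rho^*_v=0$ since $G_v(0)=F_v(0)=0$) together with optimality of $x^*$ to reach $\kappa^*\LB$, and finally Proposition~\ref{prop:lb}. Your technical remarks are sound; one small correction is that finiteness of $\kappa_v$ for polynomial power functions is established in Lemma~\ref{lem:kappa}, with Theorem~\ref{thm:phi} only handling $\alpha=2$.
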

\begin{proof}
We have
\begin{align*}
J(\ALG)&=\sum_vG_v(\rho^*_v)&&(\text{by Lemma~\ref{lem:alg-stoch}})\\
&\leqslant\kappa^*\sum_vF_v(\rho^*_v)=\kappa^*\cdot\LB&&(\because G_v(0)=F_v(0)=0,\ x^*\text{ is optimal})\\
&\leqslant\kappa^*\cdot J(\pi)\,.&&(\text{by Proposition~\ref{prop:lb}})
\end{align*}
\end{proof}

If $\LB$ is only solved up to a factor $1+\varepsilon$, the same proof gives the ratio $(1+\varepsilon)\kappa^*$. For general power functions, we do not know whether $\kappa_v$ is always finite; the next lemma shows that it is finite for polynomial ones.

\begin{lemma}\label{lem:kappa}
Let $P_v(s)=\gamma s^\alpha$ with $\gamma>0$ and $\alpha>1$. Then the following hold.\par
\noindent\emph{Part 1.} $\kappa_v$ does not depend on $\gamma$; we denote it by $\kappa(\alpha)$.\par
\noindent\emph{Part 2.} $G_v\geqslant F_v$, and $G_v(\rho)/F_v(\rho)\to1$ as $\rho\to0$ and as $\rho\to\infty$.\par
\noindent\emph{Part 3.} $\kappa(\alpha)\leqslant\max\{e,\alpha-1\}$; in particular, $\kappa(\alpha)<\infty$.
\end{lemma}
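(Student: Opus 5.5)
Part~1 is a scaling argument. Set $s_0\defeq\gamma^{-1/\alpha}$ and substitute $\rho=s_0r$, $s=s_0u$, $\sigma=s_0w$. Then $\gamma s^\alpha=(s_0u)^\alpha\gamma=u^\alpha$, so
\[
g_v(\rho,s)=\frac{r}{u-r}+r\,\frac{u^\alpha}{s_0u}\cdot s_0=\frac{r}{u-r}+ru^{\alpha-1},
\]
which is $g$ for $\gamma=1$ evaluated at $(r,u)$. Thus $G_v(s_0r)=G_1(r)$, where the subscript $1$ denotes the case $\gamma=1$. Similarly $h_v(s_0w)=(1+w^\alpha)/(s_0w)$, so $F_v(s_0r)=s_0r\cdot\min_{w\geqslant r}(1+w^\alpha)/(s_0w)=F_1(r)$. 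Both functions are therefore reparametrized by the same map $\rho\mapsto\rho/s_0$, and the supremum of their ratio does not change. From now on I take $\gamma=1$ and use the explicit forms $s^*=(\alpha-1)^{-1/\alpha}$, $c^*=\alpha(s^*)^{\alpha-1}$ and $F$ from Lemma~\ref{lem:F}.

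For Part~2, the inequality $G\geqslant F$ needs a direct argument, because Lemma~\ref{lem:node-lb} only gives it almost surely for an admissible policy. I compare pointwise: for $s>\rho$,
\[
g(\rho,s)=\frac{\rho}{s-\rho}+\frac{\rho}{s}P(s)\geqslant\frac{\rho}{s}\bigl(1+P(s)\bigr)=\rho\,h(s)\geqslant F(\rho),
\]
using $\rho/(s-\rho)\geqslant\rho/s$ and $s\geqslant\rho$.

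For the limits, I handle $\rho\to0$ and $\rho\to\infty$ separately.
\begin{itemize}
\item As $\rho\to0$, I take $s=s^*+\delta$ with a fixed small $\delta>0$. This gives $G(\rho)\leqslant\rho\bigl(1/(s^*+\delta-\rho)+P(s^*+\delta)/(s^*+\delta)\bigr)$, and the bracket tends to $h(s^*+\delta)$, which is close to $c^*$ when $\delta$ is small. Since $F(\rho)=c^*\rho$ for $\rho\leqslant s^*$, the ratio tends to $1$.
\item As $\rho\to\infty$, I take $s=\rho+\rho^{\beta}$ with $\beta\in(0,1)$ chosen so that both $\rho/(s-\rho)=\rho^{1-\beta}$ and the excess $\rho s^{\alpha-1}-\rho^\alpha\approx(\alpha-1)\rho^{\alpha-1+\beta}$ are $o(\rho^\alpha)$. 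Any $\beta\in(0,1)$ works, since $1-\beta<\alpha$. As $F(\rho)=1+\rho^\alpha$, the ratio again tends to $1$.
\end{itemize}

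Part~3 is the main obstacle: a uniform bound $\max\{e,\alpha-1\}$ with the right constants. My plan is to plug in a convenient speed $s=t\rho$ (or $s=\rho+c$) in each regime rather than optimizing exactly.

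In the regime $\rho\leqslant s^*$, where $F=c^*\rho$, I would choose $s=\rho+s^*$. This gives
\[
G\leqslant\frac{\rho}{s^*}+\rho(\rho+s^*)^{\alpha-1}\leqslant\rho\Bigl(\frac1{s^*}+(2s^*)^{\alpha-1}\Bigr).
\]
Dividing by $c^*\rho=\alpha(s^*)^{\alpha-1}$ and using $(s^*)^\alpha=1/(\alpha-1)$, the ratio becomes
\[
\frac{\alpha-1+2^{\alpha-1}}{\alpha}.
\]
This is too large for big $\alpha$, so instead I would take $s=\rho+s^*/k$ with $k$ tuned, or $s=\max\{s^*,\rho(1+1/(\alpha-1))\}$. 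The target bound comes from $(1+1/(\alpha-1))^{\alpha-1}\leqslant e$ paired with a term of order $\alpha-1$, which is the source of the $\max\{e,\alpha-1\}$.

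In the regime $\rho\geqslant s^*$, where $F=1+\rho^\alpha$, I would choose $s=\rho\,\alpha/(\alpha-1)$. This gives
\[
G\leqslant(\alpha-1)+\rho^\alpha\bigl(\alpha/(\alpha-1)\bigr)^{\alpha-1}\leqslant(\alpha-1)+e\rho^\alpha\leqslant\max\{\alpha-1,e\}\,(1+\rho^\alpha).
\]
The low-load regime should then be handled the same way, by taking $s=\max\{\rho\alpha/(\alpha-1),\,s^*\}$-type choices and bounding each term against $c^*\rho$. I expect that balancing that case, near $\rho\approx s^*$, is the only delicate computation.
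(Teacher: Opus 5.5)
Your Parts~1 and~2 are fine. The rescaling is the paper's. Your pointwise chain $g(\rho,s)\geqslant\rho\,h(s)\geqslant F(\rho)$ is a more direct proof of $G\geqslant F$ than the paper's appeal to Lemma~\ref{lem:node-lb}. Your speeds $s=s^*+\delta$ and $s=\rho+\rho^\beta$ handle the two limits as well as the paper's $s=\rho+s^*$ and $s=(1+\delta)\rho$. The high-load half of Part~3 ($\rho\geqslant s^*$, $s=\rho\alpha/(\alpha-1)$) is exactly the paper's argument.

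\textbf{The gap.} The low-load half of Part~3 ($\rho\leqslant s^*$) is never carried out. The max-type speed you end up proposing there, $s=\max\{s^*,\rho\alpha/(\alpha-1)\}$, does not give the bound. At $\rho=s^*(\alpha-1)/\alpha$ both branches equal $s^*$. There $F(\rho)=c^*\rho=(\alpha-1)(s^*)^\alpha=1$, but $g(\rho,s^*)=(\alpha-1)+1/\alpha$. This exceeds $\max\{e,\alpha-1\}$ for every $\alpha$ above about $3.43$. So as written, Part~3 is not established, and neither is the finiteness of $\kappa(\alpha)$ that Theorem~\ref{thm:stoch} relies on.

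\textbf{The missing step.} It is your other suggestion, $s=\rho+s^*/k$, with $k=\alpha-1$; this is what the paper uses. Take $s=\rho+s^*/(\alpha-1)$.
\begin{itemize}
\item Since $(s^*)^\alpha=1/(\alpha-1)$, the first term is $\rho/(s-\rho)=(\alpha-1)\rho/s^*=(\alpha-1)^2(s^*)^{\alpha-1}\rho$.
\item Since $\rho\leqslant s^*$, the second term satisfies $\rho s^{\alpha-1}\leqslant\rho(s^*)^{\alpha-1}\bigl(1+\tfrac1{\alpha-1}\bigr)^{\alpha-1}\leqslant e\,(s^*)^{\alpha-1}\rho$.
\item Dividing by $F(\rho)=\alpha(s^*)^{\alpha-1}\rho$ gives $G(\rho)/F(\rho)\leqslant((\alpha-1)^2+e)/\alpha$.
\item With $a=\alpha-1$, one checks $(a^2+e)/(a+1)\leqslant e$ iff $a\leqslant e$, and $(a^2+e)/(a+1)\leqslant a$ iff $a\geqslant e$. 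So the bound is at most $\max\{e,\alpha-1\}$ in both cases.
\end{itemize}
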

\begin{proof}
\emph{Part 1.} Substituting $\rho=\gamma^{-1/\alpha}\ell$, $s=\gamma^{-1/\alpha}u$, and $\sigma=\gamma^{-1/\alpha}\tau$ turns $g_v(\rho,s)$ and $\rho h_v(\sigma)$ into the corresponding expressions for $\gamma=1$, and the constraints $s>\rho$ and $\sigma\geqslant\rho$ into $u>\ell$ and $\tau\geqslant\ell$, so we may assume $\gamma=1$. Then $(s^*)^{-\alpha}=\alpha-1$ and $c^*=P'(s^*)=\alpha(s^*)^{\alpha-1}$ (Lemma~\ref{lem:F}).

\emph{Part 2.} $G(\rho)\geqslant F(\rho)$ by Lemma~\ref{lem:node-lb} applied to a single server running at a gated static speed. For $\rho\leqslant s^*$, taking $s=\rho+s^*$ gives
\[
\frac{G(\rho)}{F(\rho)}\leqslant\frac{\rho/s^*+\rho(\rho+s^*)^{\alpha-1}}{\alpha(s^*)^{\alpha-1}\rho}\ \xrightarrow{\ \rho\to0\ }\ \frac{(s^*)^{-\alpha}+1}{\alpha}=1\,.
\]
For $\rho\geqslant s^*$ and any $\delta>0$, taking $s=(1+\delta)\rho$ gives $G(\rho)\leqslant1/\delta+(1+\delta)^{\alpha-1}\rho^\alpha$, while $F(\rho)=1+\rho^\alpha$, so $\limsup_{\rho\to\infty}G/F\leqslant(1+\delta)^{\alpha-1}$ for every $\delta>0$.

\emph{Part 3.} We use $1/s^*=(\alpha-1)(s^*)^{\alpha-1}$ and $(1+\frac1{\alpha-1})^{\alpha-1}\leqslant e$. For $\rho\leqslant s^*$, taking $s=\rho+s^*/(\alpha-1)$ gives
\begin{align*}
G(\rho)&\leqslant\frac{(\alpha-1)\rho}{s^*}+\rho\Big(\rho+\frac{s^*}{\alpha-1}\Big)^{\alpha-1}\\
&\leqslant\big((\alpha-1)^2+e\big)(s^*)^{\alpha-1}\rho=\frac{(\alpha-1)^2+e}{\alpha}\,F(\rho)\,,
\end{align*}
and $((\alpha-1)^2+e)/\alpha$ is at most $e$ if $\alpha-1\leqslant e$, and at most $\alpha-1$ otherwise. For $\rho\geqslant s^*$, taking $\delta=1/(\alpha-1)$ above gives $G(\rho)\leqslant(\alpha-1)+e\rho^\alpha\leqslant\max\{e,\alpha-1\}(1+\rho^\alpha)=\max\{e,\alpha-1\}F(\rho)$.
\end{proof}

The values of $\kappa(\alpha)$ in Section~1 were computed numerically: $s(\rho)$ is the unique root $s>\rho$ of $(\alpha-1)s^{\alpha-2}(s-\rho)^2=1$ (the left side increases from $0$ to $\infty$ on $s>\rho$), and by Part~2 of Lemma~\ref{lem:kappa}, it suffices to search a bounded range of $\rho$. The bound of Part~3 gives, e.g., $e\approx2.72$ for $\alpha=2$ and $9$ for $\alpha=10$, compared with $\kappa(2)=\varphi$ and $\kappa(10)\approx5.064$. For $\alpha=2$, $\kappa(2)$ has a closed form.

\begin{theorem}\label{thm:phi}
If $P_v(s)=\gamma_vs^2$ for every server $v$, then Algorithm~\ref{alg:stoch} runs $v$ at speed $\rho^*_v+\gamma_v^{-1/2}$ when it is non-empty, and it is $\varphi$-competitive.
\end{theorem}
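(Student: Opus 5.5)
By Part~1 of Lemma~\ref{lem:kappa}, the ratio $\kappa_v$ does not depend on $\gamma_v$, so I will do all the computations for $\gamma=1$. I will then recover the speed for general $\gamma_v$ through the substitution $\rho=\gamma^{-1/2}\ell$, $s=\gamma^{-1/2}u$ used in that proof. The plan has three steps: compute $s_v(\rho)$ and $G_v$ explicitly, compute $F_v$ from Lemma~\ref{lem:F}, and maximize $G_v/F_v$ by elementary calculus. Theorem~\ref{thm:stoch} then gives $\varphi$-competitiveness.

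\emph{Speed and $G$.} For $\gamma=1$ we have $g(\rho,s)=\rho/(s-\rho)+\rho s$. Then $\partial_s g=\rho\bigl(1-(s-\rho)^{-2}\bigr)$, which is negative for $s<\rho+1$ and positive for $s>\rho+1$. This matches the root condition $(\alpha-1)s^{\alpha-2}(s-\rho)^2=1$ stated after Lemma~\ref{lem:kappa}. So the unique minimizer is $s(\rho)=\rho+1$, and
\[
G(\rho)=\rho+\rho(\rho+1)=\rho^2+2\rho.
\]
Undoing the scaling gives $s=\gamma^{-1/2}(\ell+1)=\rho+\gamma^{-1/2}$. Hence Algorithm~\ref{alg:stoch} runs $v$ at speed $\rho^*_v+\gamma_v^{-1/2}$, which is the first claim.

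\emph{$F$.} By Lemma~\ref{lem:F}, or directly from the formulas after Proposition~\ref{prop:lb}, $s^*=1$ and $c^*=2$. Therefore $F(\rho)=2\rho$ for $\rho\le1$ and $F(\rho)=1+\rho^2$ for $\rho\ge1$.

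\emph{Maximizing the ratio.} I split into two cases.
\begin{itemize}
\item For $0<\rho\le1$: $G/F=(\rho+2)/2\le 3/2<\varphi$.
\item For $\rho\ge1$: set $r(\rho)=(\rho^2+2\rho)/(1+\rho^2)$. The numerator of $r'$ is
\[
(2\rho+2)(1+\rho^2)-2\rho(\rho^2+2\rho)=2(1+\rho-\rho^2),
\]
which is positive for $\rho<\varphi$ and negative for $\rho>\varphi$. So the maximum over $\rho\ge1$ is attained at $\rho=\varphi$, the positive root of $\rho^2=\rho+1$. Using $\varphi^2=\varphi+1$,
\[
r(\varphi)=\frac{3\varphi+1}{\varphi+2},
\]
and this equals $\varphi$ because $\varphi(\varphi+2)=\varphi^2+2\varphi=3\varphi+1$.
\end{itemize}
Hence $\kappa(2)=\varphi$, with the supremum attained. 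Theorem~\ref{thm:stoch} then gives that Algorithm~\ref{alg:stoch} is $\varphi$-competitive.

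None of these steps is a real obstacle; everything is explicit. The one point needing care is checking that the speed formula transforms correctly under the $\gamma$-scaling. Equivalently, one can minimize $\rho/(s-\rho)+\gamma\rho s$ directly: setting $(s-\rho)^{-2}=\gamma$ gives $s=\rho+\gamma^{-1/2}$, which confirms the claimed speed.
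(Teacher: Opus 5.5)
Your proof is correct and follows the paper's argument: reduce to $\gamma_v=1$ via Lemma~\ref{lem:kappa}, obtain $s(\rho)=\rho+1$ and $G(\rho)=\rho^2+2\rho$, bound $G/F$ separately on the two regimes of $F$, and conclude with Theorem~\ref{thm:stoch}. The differences are cosmetic. For $\rho\geqslant1$ the paper completes the square, $\varphi F(\rho)-G(\rho)=(\rho-\varphi)^2/\varphi$, where you differentiate the ratio. You also check the speed $\rho^*_v+\gamma_v^{-1/2}$ for general $\gamma_v$ explicitly, which the paper leaves implicit.
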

\begin{proof}
By Part~1 of Lemma~\ref{lem:kappa}, we may assume $\gamma_v=1$. Then $g(\rho,s)=\rho/(s-\rho)+\rho s$ is convex in $s$, and $\partial g/\partial s=0$ at $s=\rho+1$, so $G(\rho)=\rho^2+2\rho$. If $\rho\leqslant1$, then $G(\rho)/F(\rho)=(\rho+2)/2\leqslant3/2<\varphi$. If $\rho\geqslant1$, then
\[
\varphi F(\rho)-G(\rho)=(\varphi-1)\rho^2-2\rho+\varphi=\frac1\varphi\left(\rho^2-2\varphi\rho+\varphi^2\right)=\frac{(\rho-\varphi)^2}\varphi\geqslant0\,,
\]
using $\varphi-1=1/\varphi$, with equality at $\rho=\varphi$. Hence $\kappa(2)=\varphi$, and Theorem~\ref{thm:stoch} completes the proof.
\end{proof}

For $P(s)=s^2$, the ratio $G/F$ is close to $1$ at very low and very high loads, where one of the two lower bounds is almost tight, and it is largest at intermediate loads, where both the flow time and the energy matter; its maximum $\varphi$ is attained at $\rho=\varphi$. Moreover, if $P_v(s)=s^2$ for every server and the routing of~\cite{VN20a} is also optimal for $\LB$ (e.g., if every flow has a single allowed route, or in the tandem networks of~\cite{VN19} with identical servers in every layer), then Algorithm~\ref{alg:stoch} can use the same routing, and it coincides with the algorithm of~\cite{VN20a} with the speed rule of~\cite{VN20b}; so Theorem~\ref{thm:phi} improves the bound of $2$ that~\cite[Theorem~27]{VN20a} gives for this algorithm to $\varphi$.

\subsection{The Routing Rule of Vaze and Nair can Lose a Factor \texorpdfstring{$\Omega(L)$}{Omega(L)}}

The algorithm of~\cite[Section~6]{VN20a} routes according to an optimal solution $x^\VN$ of the program $\min_{x\in\cX}\sum_vP_v(\rho_v(x))$~\cite[Eq.~(24)]{VN20a}.
Consider the following instance (Figure~\ref{fig:instance}) with $P_v(s)=s^2$ for every server. Let $K\geqslant2$, $m\geqslant1$, and $L\defeq K+2$. There is a single flow with rate $\lambda\in(0,1)$ and $m+1$ allowed routes: a \emph{short} route $p_0=(\mathsf{o},a_0,\mathsf{d})$, and $m$ \emph{long} routes $p_j=(\mathsf{o},a_{j,1},\ldots,a_{j,K},\mathsf{d})$ of length $L$, where the servers $a_0$ and $a_{j,k}$ are distinct. Writing $x_j$ for the rate routed along $p_j$, the objective of~\cite{VN20a} is $2\lambda^2+x_0^2+K\sum_{j\geqslant1}x_j^2$, and its unique minimizer satisfies $x_0=Kx_j$ (at the optimum, the derivatives $2x_0$ and $2Kx_j$ of the objective are equal), i.e.,
\begin{equation}\label{eq:xvn}
x^\VN_0=\frac{K\lambda}{K+m}\qquad\text{and}\qquad x^\VN_j=\frac{\lambda}{K+m}\quad\forall j\in[m]\,.
\end{equation}
\begin{proposition}[Formal version of Proposition~\ref{prop:informal}]\label{prop:separation}
On this instance, every admissible policy $\pi$ with $x^\pi=x^\VN$ satisfies, almost surely and irrespective of its scheduling and speeds,
\[
J(\pi)\geqslant\frac{2}{3(\lambda+2)}\left(2+\frac{K(m+1)}{K+m}\right)\OPT\,.
\]
The factor tends to $(K+2)/3=L/3$ as $\lambda\to0$ and $m\to\infty$; in particular, for every $\varepsilon>0$, it is at least $(1-\varepsilon)L/3$ when $\lambda$ is small and $m$ is large enough.
\end{proposition}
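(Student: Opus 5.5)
Our plan is to compare a lower bound on $J(\pi)$ under the fixed routing $x^\VN$ with an upper bound on $\OPT$ given by a concrete admissible policy.

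\emph{Lower bound.} By Lemma~\ref{lem:node-lb} and the summation behind Proposition~\ref{prop:lb}, every admissible $\pi$ with $x^\pi=x^\VN$ satisfies $J(\pi)\geqslant\sum_vF_v(\rho_v(x^\VN))$ almost surely, whatever its scheduling and speeds. For $P(s)=s^2$ we have $s^*=1$, $c^*=2$, and every load is at most $\lambda<1$. Hence $F_v(\rho)=2\rho$ on every server, and the bound is linear: it equals $2$ times the total visit rate. Servers $\mathsf{o}$ and $\mathsf{d}$ each have load $\lambda$, and $a_0$ has load $x^\VN_0$. The $mK$ servers $a_{j,k}$ each have load $x^\VN_j$. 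By~\eqref{eq:xvn}, the total visit rate is
\[
2\lambda+\frac{K\lambda}{K+m}+mK\frac{\lambda}{K+m}=\lambda\Big(2+\frac{K(m+1)}{K+m}\Big).
\]
Therefore $J(\pi)\geqslant2\lambda\big(2+\frac{K(m+1)}{K+m}\big)$.

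\emph{Upper bound on $\OPT$.} We take the policy that sends every job along $p_0$ and runs $\mathsf{o},a_0,\mathsf{d}$ at gated static speeds, as in Algorithm~\ref{alg:stoch}. By Lemma~\ref{lem:alg-stoch} it is admissible, and its cost is $3G(\lambda)$. In the proof of Theorem~\ref{thm:phi} we computed $G(\rho)=\rho^2+2\rho$, so $\OPT\leqslant3\lambda(\lambda+2)$. Dividing the two bounds gives
\[
\frac{J(\pi)}{\OPT}\geqslant\frac{2\lambda\big(2+\frac{K(m+1)}{K+m}\big)}{3\lambda(\lambda+2)}=\frac{2}{3(\lambda+2)}\Big(2+\frac{K(m+1)}{K+m}\Big),
\]
which is exactly the claimed inequality. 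One subtlety needs care: $\OPT$ is defined as an infimum over admissible policies, so exhibiting one policy with almost-sure cost $3G(\lambda)$ suffices.

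\emph{Limit.} As $m\to\infty$, the term $K(m+1)/(K+m)$ tends to $K$. As $\lambda\to0$, the prefactor $2/(3(\lambda+2))$ tends to $1/3$. The factor therefore tends to $(K+2)/3=L/3$, and for fixed $\varepsilon$ it is at least $(1-\varepsilon)L/3$ once $\lambda$ is small and $m$ large. No step is a serious obstacle. The only point to check is that all loads stay below $s^*=1$, so that $F$ is linear in the lower bound; this holds since $\lambda<1$.
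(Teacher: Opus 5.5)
Your proposal is correct and follows the paper's own proof essentially step for step. You lower-bound $J(\pi)$ by $\sum_v F_v(\rho_v(x^\VN))=2\lambda\bigl(2+\frac{K(m+1)}{K+m}\bigr)$ via Proposition~\ref{prop:lb}, using that $F$ is linear since every load is below $s^*=1$, and you upper-bound $\OPT$ by the cost $3G(\lambda)=3\lambda(\lambda+2)$ of the short-route policy with gated static speed $\lambda+1$. One small nit: Lemma~\ref{lem:alg-stoch} is stated for Algorithm~\ref{alg:stoch}, so for the short-route policy you should invoke its proof rather than the lemma itself (the paper says ``as in'' that lemma), although on this instance the two policies in fact coincide.
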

\begin{proof}
Consider the policy which routes every job along $p_0$ and runs $\mathsf{o},a_0,\mathsf{d}$ in FCFS order at speed $\lambda+1$ when non-empty. Each of these servers has load $\lambda$, so, as in Lemma~\ref{lem:alg-stoch}, the policy is admissible and its cost is $3g(\lambda,\lambda+1)=3G(\lambda)=3\lambda(\lambda+2)$ (see the proof of Theorem~\ref{thm:phi}). So $\OPT\leqslant3\lambda(\lambda+2)$. On the other hand, all the loads under $x^\VN$ are at most $\lambda<s^*=1$, so $F(\rho_v)=2\rho_v$ for every server, and by Proposition~\ref{prop:lb} and~\eqref{eq:xvn},
\[
J(\pi)\geqslant2\left(2\lambda+x^\VN_0+Km\,x^\VN_1\right)=2\lambda\left(2+\frac{K(m+1)}{K+m}\right),
\]
which gives the claim since $\OPT\leqslant3\lambda(\lambda+2)$.
\end{proof}

Here $\thnet\to L/3$ as $m\to\infty$, so the bound $\theta_\alpha\cdot\thnet$ of~\cite{VN20a} is tight for their routing rule up to the factor $\theta_\alpha$. In contrast, Algorithm~\ref{alg:stoch} routes every job along $p_0$, and is therefore optimal up to the factor $(\lambda+2)/2$ on this instance. Indeed, every load is at most $\lambda<1=s^*$ under every routing, so $F'(\rho_v)=2$ at every server, and the marginal cost of a route in the program defining $\LB$ is twice its length: $6$ for $p_0$ and $2L\geqslant8$ for every long route. Hence the unique optimal routing for $\LB$ sends all the traffic along $p_0$, and $J(\ALG)=3G(\lambda)=3\lambda(\lambda+2)$, while $\LB=6\lambda$.

\section{Conclusion and Open Problems}\label{sec:conclusion}

We showed that, for stochastic arrivals, the competitive ratio for network speed scaling does not have to depend on the network: routing by a convex program whose objective is a lower bound on the cost of every policy, together with a static speed at every server, gives a ratio which depends only on the power functions, and which is at most $\varphi$ for $P(s)=s^2$. We conclude with three open problems.
\begin{enumerate}
\item \emph{Tightness.} We do not know how far $\LB$ is from $\OPT$. What is the competitive ratio of Algorithm~\ref{alg:stoch} against $\OPT$, and can speeds which depend on the number of jobs at a server beat $\varphi$ for $P(s)=s^2$? Even for a single server, this requires comparing static and dynamic speeds~\cite{GH01,ALW10}.
\item \emph{Beyond independent exponential work.} What happens with general work distributions, or when a job has the same work at every server, as for packets in a communication network?
\item \emph{Arbitrary arrivals.} For the worst case setting, is there a network-independent competitive ratio achieved by a simple speed rule which, as in~\cite{VN19,VN20a}, does not use the works of the jobs?
\end{enumerate}

\section*{Acknowledgements}
YK thanks Rahul Vaze for introducing him to this problem during a virtual summer internship at the Tata Institute of Fundamental Research (TIFR) in 2020. The work was not completed during the internship, and the results in this paper were obtained later. During the internship, YK was a student at IISc and was supported by the Ministry of Education, Government of India.

\section*{Use of AI Tools}
During the preparation of this work, YK used Claude Opus 5.5 (Anthropic) to develop the results, to search the literature, and to write and typeset the manuscript. After using this tool, the author reviewed and edited the content as needed and takes full responsibility for the content of the publication.

\bibliographystyle{alpha}
\bibliography{mybibliography}

\appendix
\section{Numerical Values of \texorpdfstring{$\kappa(\alpha)$}{kappa(alpha)}}\label{app:kappa}

For $P(s)=s^\alpha$, Table~\ref{tab:kappa} lists $\kappa(\alpha)=\sup_{\rho>0}G(\rho)/F(\rho)$ (Lemma~\ref{lem:kappa}), the load $\rho$ at which the supremum is attained, the bound $\max\{e,\alpha-1\}$ of Part~3 of Lemma~\ref{lem:kappa}, and the factor $\theta_\alpha=3+\max\{1,2\alpha-2\}$ of~\cite{VN20a}. By Part~1 of Lemma~\ref{lem:kappa}, the values do not depend on $\gamma$. Figure~\ref{fig:ratio} shows $G/F$ for $\alpha=2$, where it has the closed form $(\rho+2)/2$ for $\rho\leqslant1$ and $(\rho^2+2\rho)/(1+\rho^2)$ for $\rho\geqslant1$ (Theorem~\ref{thm:phi}).

\begin{table}[h]
\centering
\begin{tabular}{@{}rcccc@{}}
\toprule
$\alpha$ & $\kappa(\alpha)$ & maximizing $\rho$ & $\max\{e,\alpha-1\}$ & $\theta_\alpha$ \\
\midrule
$1.25$ & $1.167$ & $5.628$ & $2.718$ & $4$ \\
$1.5$  & $1.325$ & $2.786$ & $2.718$ & $4$ \\
$2$    & $1.618$ & $1.618$ & $2.718$ & $5$ \\
$2.5$  & $1.891$ & $1.300$ & $2.718$ & $6$ \\
$3$    & $2.148$ & $1.163$ & $2.718$ & $7$ \\
$4$    & $2.630$ & $1.049$ & $3$ & $9$ \\
$5$    & $3.080$ & $1.004$ & $4$ & $11$ \\
$10$   & $5.064$ & $0.961$ & $9$ & $21$ \\
\bottomrule
\end{tabular}
\caption{The competitive ratio $\kappa(\alpha)$ of Algorithm~\ref{alg:stoch} for $P(s)=\gamma s^\alpha$, compared with the bound of Lemma~\ref{lem:kappa} and the factor $\theta_\alpha$ of~\cite{VN20a}.}
\label{tab:kappa}
\end{table}

\begin{figure}[h]
\centering
\begin{tikzpicture}
\begin{axis}[width=0.7\linewidth,height=5.2cm,xmode=log,xmin=0.01,xmax=100,ymin=0.95,ymax=1.7,
  xlabel={load $\rho$},ylabel={$G(\rho)/F(\rho)$},
  ytick={1,1.2,1.4,1.618},yticklabels={$1$,$1.2$,$1.4$,$\varphi$},
  grid=major,grid style={gray!25},thick]
\addplot[blue,domain=0.01:1,samples=60] {(x+2)/2};
\addplot[blue,domain=1:100,samples=200] {(x^2+2*x)/(1+x^2)};
\addplot[dashed,gray,domain=0.01:100] {1.6180339887};
\addplot[only marks,mark=*,mark size=1.6pt] coordinates {(1.6180339887,1.6180339887)};
\end{axis}
\end{tikzpicture}
\caption{The ratio $G/F$ for $P(s)=s^2$. It tends to $1$ at low and at high loads, and its maximum $\varphi$ is attained at $\rho=\varphi$.}
\label{fig:ratio}
\end{figure}

The values in Table~\ref{tab:kappa} were computed as follows. For $\rho>0$, the speed $s(\rho)=\rho+u$ minimizing $g(\rho,\cdot)$ is given by the unique root $u>0$ of $(\alpha-1)(\rho+u)^{\alpha-2}u^2=1$, found by bisection on $\log u$ (the left side is increasing in $u$), and $G(\rho)=\rho/u+\rho(\rho+u)^{\alpha-1}$; $F$ is given by Lemma~\ref{lem:F}, with $s^*=(\alpha-1)^{-1/\alpha}$ and $c^*=\alpha(s^*)^{\alpha-1}$. The ratio $G/F$ was maximized over $\rho\in[e^{-12},e^{12}]$ by a grid search on $\log\rho$ followed by a local refinement; by Part~2 of Lemma~\ref{lem:kappa}, the ratio tends to $1$ outside this range. 

\end{document}